\title{Lower Bounds in the Preprocessing and Query Phases of
 Routing Algorithms}
\author{Colin White}
\institute{Carnegie Mellon University \\
	\email{crwhite@cs.cmu.edu}\qquad
	\texttt{http://cs.cmu.edu/{\raise.17ex\hbox{$\scriptstyle\sim$}}crwhite}
	}
\begin{document}
\maketitle

\begin{abstract}

In the last decade, there has been a substantial amount of research in finding
routing algorithms designed specifically to run on real-world graphs.
In 2010, Abraham et al.\ showed upper bounds
on the query time in
terms of a graph's \textit{highway dimension} and diameter for 
the current fastest routing
algorithms, including {\sc contraction hierarchies}, 
{\sc transit node routing}, and {\sc hub labeling}.
In this paper, we show corresponding lower bounds for the same three algorithms.
We also show how to improve a result by Milosavljevi\'c
which lower bounds the number of shortcuts added in the preprocessing
stage for {\sc contraction hierarchies}. We relax the assumption of
an optimal contraction order (which is NP-hard to compute), 
allowing the result to be applicable to real-world instances.
Finally, we give a proof that optimal preprocessing for 
{\sc hub labeling} is NP-hard. Hardness of optimal preprocessing
is known for most routing algorithms, 
and was suspected to be true for {\sc hub labeling}. 
\end{abstract}

\section{Introduction}\label{sec:Introduction}

The problem of finding shortest paths in road networks has been well-studied in the last
decade, motivated by the application of computing driving directions.  
Although Dijkstra's algorithm runs in small polynomial time, for applications involving
continental-sized road networks, Dijkstra's algorithm
is simply not fast enough.  There have been many different approaches
to find algorithms that specifically run fast on real-world graphs.

Most recent innovations involve a two-stage algorithm: a preprocessing stage and a query
stage.  The preprocessing stage runs once and can spend hours calculating data.
Then the query stage uses this data to find shortest paths very fast, often
several orders of magnitude faster than Dijkstra's algorithm for a continental query.  
Once the preprocessing
stage is completed, the users can run as many queries as they want.
For a query between two nodes $s$ and $t$ (an $s$--$t$ query), the algorithm returns
$\mbox{dist}(s,t)$, the cost of the shortest path between $s$ and $t$.  Most algorithms
can also return the vertices on the shortest path using an extra data structure.

The current fastest routing algorithm on real-world graphs is
{\sc hub labeling} \cite{hl}, which achieves a speedup of six orders of magnitude
over Dijkstra's algorithm.  The {\sc transit node routing} algorithm 
is 
second-fastest, and requires an order of magnitude less space than 
{\sc hub labeling}. {\sc contraction hierarchies} is also a fast routing
algorithm, which was state of the art in 2008.
For a comprehensive overview of the best routing algorithms, see \cite{rptn}.

Until recently, it was known that these algorithms performed very well on
real-world maps, but there were no theoretical guarantees.
In fact, it is not hard to construct specific graphs for which these algorithms
perform no faster than Dijkstra's algorithm.
So, an interesting theoretical question is to find properties present in all
real-life graphs that explain why these algorithms work so well.

With this motivation in mind,
Abraham et al.\ defined the notion of \textit{highway dimension} \cite{hdnew},
intuitively, the extent to which all shortest paths are 
hit by at least one of a small set of \textit{access nodes}.  
Although it is too computationally intensive to calculate the exact highway dimension
for a continental road map, there is evidence that the highway dimension $h$
is at most polylogarithmic in the number of vertices.
It is conjectured
that real-world routing networks always have low highway dimension, based on experimental evidence
\cite{hd}.
Abraham et al.\ were able to prove strong upper bounds on the query times in terms of highway dimension and
diameter $d$ for four of the fastest routing algorithms: 
{\sc hub labeling}, {\sc contraction hierarchies}, {\sc transit node routing},
and {\sc reach}.

\subsection{Our results}
In this paper, we are interested in finding lower bounds for the current
state-of-the-art routing algorithms.
We show tight or near-tight bounds on the runtime for
{\sc hub labeling}, {\sc contraction hierarchies}, and {\sc transit node routing}.

Our lower bounds may facilitate proving better guarantees of these
algorithms, or provide intuition for new routing algorithms, if
one can find differences between the graphs we use and real world
instances. For example, the graphs we use have low highway dimension,
but they do not have small separators and are nonplanar, so perhaps
there is a way to modify {\sc hub labeling} to take this into account.

We show a tight lower bound for {\sc hub labeling}, the fastest routing
algorithm to date \cite{rptn}.  
For {\sc contraction hierarchies} and
{\sc transit node routing}, the definition of highway dimension in the lower bound versus upper
bound is slightly different (because of a recent redefinition by Abraham et al.\ ), so we cannot quite say the bounds are tight.

We can also use our analysis to generalize a known result 
by Milosavljevi\'c,
which lower bounds the
number of shortcut edges in the preprocessing stage of 
{\sc contraction hierarchies} \cite{milo}.
This result assumes an optimal contraction order which is NP-hard to compute
\cite{hard}.
So for real-world instances, we rely on using contraction orders based
on heuristics.
We show how to relax the assumption about the contraction order,
which means the result can be applied to real-world instances.

We also contribute a hardness result for
optimal preprocessing of {\sc hub labeling}.
In 2010, Bauer et al.\ established hardness for optimal preprocessing for
a variety of the best routing algorithms, including 
{\sc contraction hierarchies} and {\sc transit node routing}.
In this paper, we show that in {\sc hub labeling} preprocessing,
the problem of minimizing the maximum label size
over all vertices is NP-hard.

This paper will proceed as follows.  
Section 2 will provide preliminary information,
specifically about highway dimension, and also the graph 
construction used in our
main theorems. In Section 3, we show a lower bound on the 
query time of the {\sc hub labeling} algorithm, and prove 
that optimal preprocessing is NP-hard.
In Section 4, we establish a lower bound on the query time for 
{\sc contraction hierarchies}, and generalize a lower bound on the 
number of shortcut edges added in the preprocessing phase. 
Section 5 establishes a lower bound on the query time
of {\sc transit node routing}. We conclude and discuss future
directions in Section 6.

\section{Preliminaries} 

In this paper, we assume
nonnegative integral edge lengths and unique shortest paths.
We will also assume graphs are undirected
in all sections except for the hardness result.
These are standard assumptions to make when proving bounds on
routing algorithms, for example, \cite{hd} and \cite{milo}.

$B_{r}(v)$ represents all nodes $u$ such that $\mbox{dist}(u,v)<r$.  We say a set of nodes \textit{covers}
a set of paths if each path has at least one of its vertices in the set of nodes.

\subsection{Highway Dimension}

Now we will formally define the notion of highway dimension.
\\
\\
\indent
The \textit{highway dimension} of a graph $G=(V,E)$ is the smallest $h$
such that for all $r > 0$ and for all $B_{4r}(v)$, there exists a set $H \subseteq V$,
such that $|H| \leq h$ and $H$ covers all shortest paths of length $\geq r$ in
$B_{4r}(v)$.
\\
\\
\indent
Highway dimension was specifically designed to explain why the best routing algorithms
perform well on real-world graphs but do not perform well on arbitrary graphs.
Although it is too computationally intensive to calculate the exact highway dimension of a continental-sized road network, 
it is conjectured that the highway dimension of
real-world graphs is at most polylogarithmic in the 
number of vertices \cite{hd}.

Abraham et al.\ introduced a slightly refined version of the original highway dimension 
in 2013 \cite{hdnew}.

The difference in the new definition versus the old one is that
instead of having to hit all local shortest paths of length $\geq r$, we have to hit all paths $P$ where
there is a shortest path  $P'$ with endpoints $s$ and $t$ such that $\l(P')>r$, $P \subseteq P'$,
and $P' \setminus P\in\{\emptyset,\{s\},\{t\},\{s,t\}\}$.
That is, we have to hit all paths that can be obtained by removing zero, one, or both endpoints of
a shortest path with length $>r$.  
We will refer to a graph's highway dimension as $h$ for the first definition, and $\hat h$
for the second definition.

The two definitions of highway dimension are very similar but have a few key differences.
Most notably, the new definition bounds the degree of the graph, which was not true before
\cite{hd}.
The new definition of highway dimension allowed
Abraham et al.\ to improve their results on the runtime of routing algorithms.

\subsection{Definition of $G_{t,k,q}$}

Now we will define the family of graphs $G_{t,k,q}$ that will be used in many
of our proofs.
$G_{t,k,q}$ was designed to by Milosavljevi\'c to
show a lower bound on the number of shortcuts created
during the preprocessing stage of  {\sc contraction hierarchies} \cite{milo}.

Consider a complete $t$-ary tree of height $k$ for integers $t,k \geq 2$.
Let $\lambda(v)$ denote the height of node $v$,
and let $\lambda(u,v)$ denote the height of the lowest common 
ancestor between two nodes $u$ and $v$.

Now define the edges
as follows: for all nodes $v$ and $w$ such that $w$ is a proper
 ancestor of $v$, there is an edge between $v$ and $w$
with length $16^{\lambda (w) - 1}$.
This means the edge length from a node $w$ to one of its descendants $v$ is
independent of $\lambda(v)$.  Furthermore, edge lengths increase for nodes higher up in 
the tree.

Denote this graph by $G_{t,k} = (V_{t,k},E_{t,k})$.  
See Figure \ref{fig:gtkq} for an example.
For convenience, we will still refer to this
graph as a tree, even though the additional edges create cycles.

Now we will define $G_{t,k,q} = (V_{t,k,q},E_{t,k,q})$ by taking $q$ copies of $G_{t,k}$,
 and naming them $G_{t,k}^{(a)} =
(V_{t,k}^{(a)},E_{t,k}^{(a)})$ for $a=1,2,...,q$.  The copy of a node $v \in G_{t,k}$ 
in $G_{t,k}^{(a)}$ is denoted $v^{(a)}$.

For all $v \in G_{t,k}$ and $a\neq b$, we add edge
$v^{(a)}$--$v^{(b)}$
to $E_{t,k,q}$ with length $2^{\lambda(v)-k-1}$.
This ensures that switching copies has a low penalty 
($2^{\lambda(v)-k-1}$ is always less
than $1$), and it is always cheaper to switch among copies lower down
in the tree.
See Figure \ref{fig:gtkq} for an example.

\begin{figure}  
\begin{center}  
\includegraphics[height=1.4in,width=4.91in]{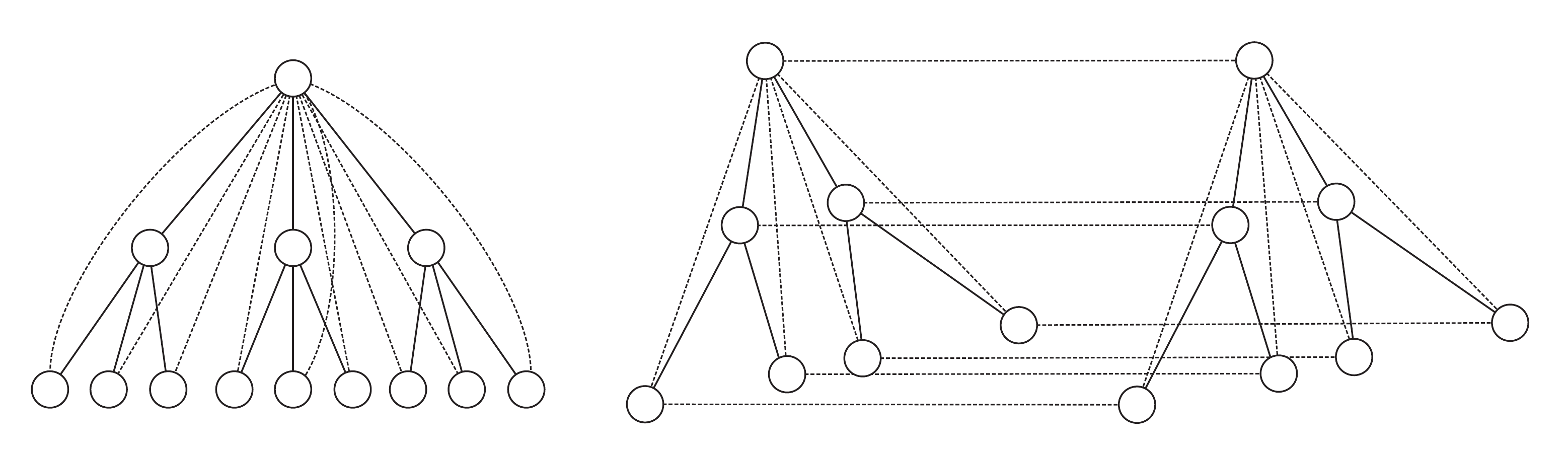}  
\caption{\small \sl The left graph is $G_{3,3}$, and the right graph is
$G_{2,3,2}$ \label{fig:gtkq}}
\end{center}  
\end{figure}

\subsection{Properties of $G_{t,k,q}$}

We will now discuss properties of $G_{t,k,q}$.
The following three lemmas are proven in \cite{milo}.

\begin{lemma} \label{sp tk}
Given $s,t \in V_{t,k}$ with lowest common ancestor $w$, the unique shortest $s$--$t$ path is
$s$--$w$--$t$. \end{lemma}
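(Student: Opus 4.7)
The plan is to exploit the fact that every edge of $G_{t,k}$ connects an ancestor-descendant pair in the underlying tree, and then use the sharp growth rate $16^{\lambda(\cdot)-1}$ of edge lengths to force any $s$--$t$ path to pay $\geq 2\cdot 16^{\lambda(w)-1}$.

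First I would fix an arbitrary $s$--$t$ path $P: s=v_0,v_1,\dots,v_m=t$ and let $u$ be a node on $P$ with maximum height $\lambda(u)$. The key structural claim is that \emph{every} $v_i$ lies in the subtree rooted at $u$, i.e.\ $u$ is a common ancestor of all vertices on $P$. The reason is that two subtrees hanging off distinct children of a node share no edges in $G_{t,k}$: every edge is between an ancestor and a descendant, and a descendant of one child of a node $x$ can only be ancestor-related to descendants of $x$ through $x$ itself. So a walk leaving $T_u$ would have to pass through a strict ancestor of $u$, contradicting maximality of $\lambda(u)$. I would formalize this by a short induction walking outward from $u=v_j$ in both directions: given that $v_{i+1}$ is a descendant of $u$ with $\lambda(v_{i+1})\le\lambda(u)$, the edge $v_iv_{i+1}$ forces $v_i$ to be either a descendant of $v_{i+1}$ (hence of $u$) or an ancestor of $v_{i+1}$ of height $\leq\lambda(u)$, which makes $v_i$ comparable to $u$ and hence a descendant of $u$.

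Since $u$ is then a common ancestor of $s$ and $t$, and $w$ is the LCA of $s$ and $t$, we get $\lambda(u)\ge\lambda(w)$. Now bound the length of $P$. If $u \notin \{s,t\}$, the two edges of $P$ incident to $u$ each go from $u$ to one of its descendants, so each has length $16^{\lambda(u)-1}$, giving
\[
\ell(P)\;\ge\;2\cdot 16^{\lambda(u)-1}\;\ge\;2\cdot 16^{\lambda(w)-1}\;=\;\ell(s\text{--}w\text{--}t).
\]
For equality one needs $\lambda(u)=\lambda(w)$ (forcing $u=w$, as $w$ is the unique common ancestor of minimum height, hence of maximum height $\leq\lambda(w)$ in the common-ancestor chain) and $P$ must consist of exactly those two edges, giving $P=s$--$w$--$t$. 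If instead $u\in\{s,t\}$, say $u=s$, then $s$ is a common ancestor of $s$ and $t$, so $w=s$, and the shortest-path candidate $s$--$w$--$t$ is just the direct edge $s$--$t$ of length $16^{\lambda(s)-1}$; any longer path uses at least one more positive-length edge and is strictly longer.

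The main obstacle is the structural lemma in the first paragraph — that the maximum-height vertex on a walk is a common ancestor of all the walk's vertices. Once that is established, the length comparison is immediate because edge lengths depend only on the height of the higher endpoint, and the gap between $2\cdot 16^{\lambda(w)-1}$ and $2\cdot 16^{\lambda(w)}$ (the cost of using a strictly higher turning point) is enormous, so uniqueness comes essentially for free.
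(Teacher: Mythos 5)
Your proof is correct and complete: the key structural step (the maximum-height vertex $u$ on any $s$--$t$ path is a weak ancestor of every vertex on that path, because every edge of $G_{t,k}$ joins an ancestor--descendant pair), the resulting bound $\ell(P)\geq 2\cdot 16^{\lambda(u)-1}\geq 2\cdot 16^{\lambda(w)-1}$, and the degenerate case $w\in\{s,t\}$ are all handled soundly, and strict monotonicity of $16^{\lambda(\cdot)-1}$ in height gives uniqueness as you say. The paper itself does not reprove this lemma --- it is imported from Milosavljevi\'c --- but your argument is the natural one for this construction and matches the intended reasoning.
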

 
\begin{lemma} \label{sp}
Given $s^{(a)}$ and $t^{(b)}$ in $G_{t,k,q}$, let $w$ be the lowest common ancestor between $s$
and $t$.  Then the shortest $s^{(a)}$--$t^{(b)}$ paths are:
 
$s^{(a)}$--$s^{(b)}$--$w^{(b)}$--$t^{(b)}$, if $\lambda (s) \leq \lambda (t)$, and/or
 
$s^{(a)}$--$w^{(a)}$--$t^{(a)}$--$t^{(b)}$, if $\lambda (t) \leq \lambda (s)$.
\end{lemma}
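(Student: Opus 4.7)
The plan is to decompose any $s^{(a)}$--$t^{(b)}$ path into in-copy tree-path segments separated by switching edges, and then bound the tree cost and switching cost separately. First I would note that since $a\neq b$, every such path must traverse at least one switching edge; writing the switches in order at nodes $v_1,\dots,v_m$, the path is a concatenation of in-copy subpaths from $x_i$ to $x_{i+1}$, where $x_0=s$, $x_{m+1}=t$, and $x_i=v_i$ otherwise. By Lemma~\ref{sp tk}, each in-copy subpath may be replaced by the unique tree shortest path between its endpoints without increasing cost, so the total cost takes the form
\begin{equation*}
\sum_{i=0}^{m} d_{G_{t,k}}(x_i, x_{i+1}) \;+\; \sum_{i=1}^{m} 2^{\lambda(v_i)-k-1}.
\end{equation*}
The triangle inequality together with Lemma~\ref{sp tk} gives that the tree-cost sum is at least $d_{G_{t,k}}(s,t) = 2\cdot 16^{\lambda(w)-1}$, while the two proposed paths each use exactly one switching edge at the lower of $s$ or $t$ and achieve total cost $2\cdot 16^{\lambda(w)-1} + 2^{\min(\lambda(s),\lambda(t))-k-1}$.

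It then remains to rule out all other candidates. If every $v_i\in\{s,w,t\}$, the tree-cost sum equals $2\cdot 16^{\lambda(w)-1}$ exactly, and the switching-cost sum is minimized by using a single switch at the lowest-level node in $\{s,w,t\}$. Since $\lambda(s), \lambda(t) \le \lambda(w)$, this node is whichever of $s$ and $t$ has smaller height, giving precisely the stated paths; when $\lambda(s)=\lambda(t)$ both options yield equal cost, which accounts for the ``and/or'' in the statement.

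The main obstacle is the case where some $v_i\notin\{s,w,t\}$, in which the path detours off the tree shortest path $s$--$w$--$t$ in an attempt to switch at a deeper, cheaper node. The key estimate is that, by Lemma~\ref{sp tk} and the edge-length rule $16^{\lambda(u)-1}$, any such detour adds extra tree cost of at least $16^{\lambda(s)-1}\ge 1$ (or the analogous bound on the $t$-side, or $2\cdot 16^{\lambda(w)-1}$ for a $w$-side detour), while the greatest possible switching saving from any collection of deeper switches is bounded above by $2^{\min(\lambda(s),\lambda(t))-k-1}<1$. Hence no detour can ever offset its own added tree cost, and analogously using additional on-path switches can only increase the switching-cost sum. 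Combining these estimates forces the optimum to be the single-switch path(s) described in the statement.
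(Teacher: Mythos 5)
The paper does not actually prove this lemma itself --- it imports Lemmas \ref{sp tk}--\ref{gtkq hd} from \cite{milo} --- so there is no in-paper proof to compare against; judged on its own, your argument is essentially correct and complete. The decomposition into in-copy segments separated by switching edges, the telescoping/triangle-inequality lower bound of $d_{G_{t,k}}(s,t)$ on the tree cost, and the observation that all switching edges cost less than $1$ while any detour off the unique in-copy shortest path $s$--$w$--$t$ costs at least $1$ more in tree length, together force a single switch at the lowest node among $\{s,w,t\}$, which is $s$ or $t$ since $w$ is a common ancestor. Two small imprecisions are worth fixing. First, $d_{G_{t,k}}(s,t)=2\cdot 16^{\lambda(w)-1}$ only when $w\notin\{s,t\}$; if one endpoint is an ancestor of the other the distance is a single edge of length $16^{\lambda(w)-1}$, though the stated paths then degenerate correctly and the argument is unaffected. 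Second, your stated detour bound $16^{\lambda(s)-1}\ge 1$ fails when $s$ is a leaf ($\lambda(s)=0$); the clean way to get the needed estimate is integrality: every tree edge has length $16^{j}$ with $j\ge 0$, hence an integer, so the tree-cost sum of any path either equals $d_{G_{t,k}}(s,t)$ exactly or exceeds it by at least $1$, and in the latter case the path already loses since the total switching cost of the candidate optimum is below $1$. With that substitution the case analysis closes, and the ``exactly one switch, at the cheaper of $s$ and $t$'' conclusion (with the tie giving both paths) follows as you describe.
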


\begin{lemma} \label{gtkq hd}
The highway dimension $h$ of $G_{t,k,q}$ is equal to $q$, 
the diameter $D$ is $ \Theta (16^k)$, and $|V_{t,k,q}|= \Theta (q t^k)$.
\end{lemma}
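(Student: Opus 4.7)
The lemma bundles three claims; I would prove each separately. The vertex count is immediate: a complete $t$-ary tree of height $k$ has $(t^{k+1}-1)/(t-1) = \Theta(t^k)$ vertices, and $G_{t,k,q}$ consists of $q$ such copies, so $|V_{t,k,q}| = \Theta(qt^k)$.

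For the diameter, I would invoke Lemma \ref{sp} to note that every shortest $s^{(a)}$--$t^{(b)}$ path uses at most one copy-switch edge (length $<1$) together with two tree edges each of length $16^{\lambda(w)-1}$, where $w$ is the tree-LCA of $s$ and $t$. Maximizing over $w$ at the root (height $k$) gives an upper bound of $2\cdot 16^{k-1} + O(1) = O(16^k)$. Since $t \geq 2$, two leaves in distinct root-subtrees realize a shortest path of length $2\cdot 16^{k-1} = \Omega(16^k)$, matching the upper bound.

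The substantive claim is $h=q$, which I would prove in two directions. For the upper bound $h \leq q$, given any $r$ and any ball $B_{4r}(v)$, I would identify a ``critical height'' $j$ so that $16^{j-1}$ sits at the threshold $r$. Because tree-edge lengths are powers of $16$, well separated from the $<1$ copy-switch lengths, every tree shortest path of length $\geq r$ has its LCA at height $\geq j$; combining this with the radius-$4r$ constraint and Lemma \ref{sp} then forces every such path lying inside $B_{4r}(v)$ to share the same tree-LCA node $w^\star$, determined by $v$ and $j$ (roughly, the height-$j$ ancestor of $v$ in the tree, or $v$ itself if $v$ is already high enough). The $q$ copies of $w^\star$ thus form a covering set of size $q$. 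For the lower bound $h \geq q$, I would take $v$ to be any leaf and $r$ slightly less than $2$, the length of a shortest sibling-to-sibling path through the parent $u$. The ball $B_{4r}(v)$ then contains $u^{(a)}$ and every child of $u$ in every copy $a$, and the in-copy sibling-pair shortest paths $x^{(a)}$--$u^{(a)}$--$y^{(a)}$ of length $2 \geq r$ have all their vertices in copy $a$. A covering set must therefore place at least one vertex in each of the $q$ copies, yielding $|H| \geq q$.

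The main obstacle I expect is the upper-bound case analysis: one must handle small $r$ (where copy-switch paths may dominate), intermediate $r$, and large $r$ separately, and verify for each regime that the tree-LCA of every qualifying path in the ball is genuinely forced to be a single node rather than one of several candidates at height $j$; the boundary cases where $4r$ lies near a power of $16$, and where the center $v$ is internal rather than a leaf, require particular care.
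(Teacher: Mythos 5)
The paper does not actually prove this lemma---it is imported verbatim from Milosavljevi\'c \cite{milo}---so there is no in-paper proof to compare against; I can only assess your argument on its own terms. Your decomposition is sound: the vertex count and diameter arguments are complete as written, and both directions of $h=q$ are the right ideas. The lower bound via sibling--parent--sibling paths of length $2$ inside a ball of radius just under $8$ is correct: each copy contains such a path entirely within itself, forcing one hub per copy. For the upper bound, the obstacle you flag does resolve favorably, and it is worth seeing why: the edge-length scales are separated by a factor of $16$, while a shortest path of length $\geq r$ contained in $B_{4r}(v)$ has length less than $8r$ by the triangle inequality, so only a factor-$8$ window of lengths can qualify. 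Checking the windows explicitly, a two-tree-edge path with LCA at height $j$ can lie in the ball only when $r \in (16^{j-1}/4,\, 2\cdot 16^{j-1}+1)$, a descendant-to-ancestor edge reaching height $j$ only when $r \in (16^{j-1}/4,\, 16^{j-1}]$, and these windows for distinct heights are pairwise disjoint; moreover two distinct height-$j$ nodes are at distance $\geq 2\cdot 16^{j} > 8r$, so all qualifying tree paths indeed pass through copies of a single $w^\star$. The one regime needing a slightly different covering set is $r \leq 1/2$, where the qualifying paths may be single copy-switch edges: there the ball can contain copies of only one tree node (anything else is at tree-distance $\geq 1 > 8r$ away for the relevant heights), and covering the edges of that $K_q$ takes $q-1 \leq q$ vertices. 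So your outline is correct and, with that case analysis written out, constitutes a complete proof.
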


It is worth noting that
at the start we assumed graphs have unique shortest paths,
but now many shortest paths in our main family of graphs are not unique.
However, this is a common assumption in routing algorithm proofs because it is not hard to perturb the input
to make all shortest paths unique while maintaining the validity of the proofs.

Additionally, integrality of edge lengths is violated.  
Since the smallest edge is $2^{-k}$
(and all edge lengths are multiples of this),
all of the edge weights can be multiplied by $2^k$ to create integral
lengths.  This will increase
$D$ by a factor of $k$, doubling $\log D$, which will not affect our results.

\section{Hub Labeling} 

The {\sc hub labeling} algorithm was first devised in 2004 by 
Gavoille et al.\ \cite{gav},
and further studied by Cohen et al.\ \cite{cohen}.
However, the algorithm was not practical for continental routing queries until
2011, when Abraham et al.\ came up with an efficient way to perform the preprocessing
and query phases, which made it the fastest routing algorithm to date \cite{hl}.

In this section, we will first give an introduction to the {\sc hub labeling} 
algorithm. Then we will present a lower bound on the query time.
Finally, we will show the preprocessing phase is NP-hard to optimize.

\subsection{The algorithm}

{\sc hub labeling} relies on 
the concept of labeling.  Each node stores information about its shortest paths that allows us
to reconstruct the shortest path during a query.  This idea is used in
a clever way to make queries run very fast.

In the {\sc hub labeling} algorithm, we give each node $v \in V$ a \textit{label}
consisting of other nodes (the \textit{hubs} of $v$), and we store the shortest
distances to the hubs from $v$.  We define a \textit{labeling} $L$ as the 
set of labels $L(v)$ for all $v \in V$.

We construct the
labeling in such a way that for any pair of nodes $s$ and $t$, $L(s) \cap 
L(t)$ contains at least one node on the shortest path from $s$ to $t$.
When satisfied, this is called the \textit{cover property}.
Then in order to perform an $s$--$t$ query, we only need to find the
$v \in L(s) \cap L(t)$ that minimizes 
$\mbox{dist}(s,v)+\mbox{dist}(v,t)$.
This can be made to take $O(|L(s)|+|L(t)|)$ time 
if the labels are sorted with some arbitrary node order.
This process returns $\mbox{dist}(s,t)$.  To return the nodes on this shortest path,
we need to add another data structure in the preprocessing stage, which does
not increase the space complexity by more than a constant factor \cite{hl}.

In Section \ref{hard}, we will show that it is NP-hard to find the labeling
that minimizes the maximum label size for all vertices.
This was suspected to be true.
Therefore, in practice we must rely on heuristics in the 
preprocessing stage.

Abraham et al.\ showed
that the query time of {\sc hub labeling} is $O(\hat h \log D)$, 
using a specific labeling \cite{hdnew}.
The proof did not use any properties of $\hat h$ that are different
from $h$, so we can also say that the query time is $O(h \log D)$.

It is not known how to construct the labeling used in their proof in polynomial time,
so they showed a corollary that uses a polynomial
preprocessing algorithm and permits queries to be handled in $O( h \log h \log D)$ time.

\subsection{Lower bounding the query time}

We cannot prove a lower bound on the minimum query time, since labelings can be constructed
to make any one query run in constant time.
Instead, we will prove a bound on the average query time
by bounding the sum of all label sizes.

\begin{theorem} \label{hl lb}
For all $h$, $D$, $n$, there is a graph $G=(V,E)$ with highway dimension $h$, diameter
$\Theta (D)$, and $|V| \geq n$, such that for any choice of labeling $L$, 
the average query requires $\Omega (h \log D)$ time.
\end{theorem}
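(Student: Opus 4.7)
The plan is to use the family $G_{t,k,q}$ of Section 2. Set $q = h$, $k = \Theta(\log D)$, and choose $t \ge 2$ large enough so that $|V| = \Theta(qt^k) \ge n$; Lemma~\ref{gtkq hd} then gives highway dimension $h$ and diameter $\Theta(D)$. A single query $(s,t)$ runs in time $\Theta(|L(s)|+|L(t)|)$, so the average query time over ordered pairs equals $\Theta(T/|V|)$ where $T = \sum_v |L(v)|$, and the theorem reduces to showing $T = \Omega(q^2 k t^k)$.

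After perturbing as in Section 2.3 so that shortest paths are unique and breaking ties to put the lca hub in the smaller-indexed copy, Lemma~\ref{sp} gives that each cross-copy leaf pair $(s^{(a)}, t^{(b)})$ whose tree-lca is $w$ at height $i$ has shortest path $s^{(a)}\text{-}w^{(a)}\text{-}t^{(a)}\text{-}t^{(b)}$, and the cover property forces a hub in $\{s^{(a)}, w^{(a)}, t^{(a)}, t^{(b)}\}$. I would proceed scale by scale: at each height $i \in \{1,\dots,k\}$ there are $\Theta(q^2 t^{k+i})$ such pairs. Fix a height-$i$ tree vertex $w$ and copy $a$, and set $X_{w,a} = |\{s : w^{(a)} \in L(s^{(a)})\}|$ and $Y_{w,a} = \sum_b |\{t : w^{(a)} \in L(t^{(b)})\}|$. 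As a lca hub, $w^{(a)}$ covers at most $X_{w,a} Y_{w,a}$ pairs at this scale; crucially $X_{w,a} \le t^i$, since only the $t^i$ copy-$a$ leaves descended from $w$ are valid sources. The pairs assigned to $w^{(a)}$ by the perturbation convention number $\Theta((q-a+1)t^{2i})$, so $X_{w,a} Y_{w,a} = \Omega((q-a+1)t^{2i})$ combined with $X_{w,a} \le t^i$ forces $Y_{w,a} = \Omega((q-a+1)t^i)$ and hence $d(w^{(a)}) \ge X_{w,a} + Y_{w,a} = \Omega((q-a+1)t^i)$. Summing over $a$ gives $\sum_{a=1}^q d(w^{(a)}) = \Omega(q^2 t^i)$; this is precisely where the source cap improves the $\sqrt{q}$ bound from plain Cauchy--Schwarz to the full $q$. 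Summing over the $t^{k-i}$ height-$i$ tree vertices contributes $\Omega(q^2 t^k)$ to $T$ at scale $i$, and summing over $i = 1,\dots,k$ yields $T = \Omega(q^2 k t^k)$, equivalent to average query time $\Omega(qk) = \Omega(h \log D)$.

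The step I expect to be most delicate is handling pairs whose hub is chosen to be one of the leaves $s^{(a)}$, $t^{(a)}$, or $t^{(b)}$ rather than the lca $w^{(a)}$. My plan is an exchange argument: a leaf-hub target entry such as $t^{(a)} \in L(t^{(b)})$ still covers only $O(t^i)$ pairs at scale $i$, no better than $w^{(a)} \in L(t^{(b)})$, yet it additionally demands a companion source entry $t^{(a)} \in L(s^{(a)})$ for every relevant source $s$, and these entries cannot be shared across different targets. Replacing any lca hub by a leaf alternative therefore never decreases $T$, so the scale-by-scale count above delivers the correct worst-case lower bound. The perturbation and integrality caveats raised in Section 2.3 affect only the hidden constants.
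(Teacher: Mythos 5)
Your construction, parameters, and decomposition by the height $i$ of the lowest common ancestor are exactly those of the paper, and the heart of your count --- that an lca entry $w^{(a)}\in L(t^{(b)})$ is useful for at most $t^i$ pairs at scale $i$ and for no pairs at any other scale --- is the paper's key step (its Case~4). The $X_{w,a}\cdot Y_{w,a}$ product bound with the cap $X_{w,a}\le t^i$ is a correct repackaging of that step, and your remark about why the cap is needed to beat the $\sqrt{q}$ loss from a bare product bound is accurate.

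The gap is in the step you flag as delicate, and it is real. First, the companion source entry $t^{(a)}\in L(s^{(a)})$ \emph{can} be shared across targets: it serves every pair $(s^{(a)},t^{(b)})$ with $b\neq a$, i.e.\ up to $q-1$ pairs, not one, so ``cannot be shared across different targets'' is false as stated. Second, and more seriously, the target-side entry $t^{(a)}\in L(t^{(b)})$ is not confined to scale $i$: under your tie-breaking it lies on the chosen path from \emph{every} leaf $u^{(a)}$ of copy $a$ to $t^{(b)}$, so it is useful for $\Theta(t^{i'})$ pairs at every scale $i'$. If you lower-bound the number of such entries scale by scale and then sum over $i$, each of these entries may be counted $k$ times and the factor $\log D$ evaporates. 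The repair is to charge type-B pairs to the scale-specific source entries instead, each of which covers at most $q-1$ pairs; the per-scale divisor then becomes $\max(t^i,q-1)$ rather than $t^i$, and you must additionally impose $t^i\gtrsim q$ --- the paper takes $t^{k/2}\ge q$ and keeps only the top half of the scales, at the cost of a constant. (The paper sidesteps the whole issue by granting every leaf its self, its $q$ copies, and its $k$ ancestors as free hubs, a lower-order $O(qt^k(k+q))$ additive cost, after which each pair needs exactly one new entry covering $1$, $1$, $q-1$, or $t^i-t^{i-1}$ pairs and the division is immediate.) With either repair your argument closes; as written, the claim that replacing an lca hub by a leaf alternative never decreases $T$ is not justified.
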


\begin{proof}
We will show that $G_{t,k,q}$ satisfies the desired requirements, with $t$, $k$, and $q$
to be defined at the end of the proof.

Consider different classes of shortest paths between 
pairs of leaves distinguished by the height of their lowest common ancestor as follows.

For $0 \leq i \leq k$, let $P_i = 
\{s$--$t \mid s$ and $t$ are leaves, and $\lambda(s,t)=i \}$.

Let $\sum_{v \in V} |L(v)| = H$.  Our goal is to show that a constant fraction of the 
$k+1$ sets $P_0,\;P_1,...,P_k$ each contribute $\Omega
(q^2 t^k)$ distinct nodes to the sum $H$.

We make the assumption that 
all the neighbors of a leaf $v^{(a)}$, and the leaf itself,
are in that leaf's label. That is, $L(v^{(a)})$ contains $v^{(b)}$ for all $b$ 
(even when $b=a$), and contains $w^{(a)}$ for all ancestors $w$ of $v$.
These are $k+q+1$
nodes per leaf and $t^k (k+q+1)$ total nodes, which is asymptotically less
than $\Omega (t^k q^2 k)$,
the desired result.  Therefore, this assumption will not affect the validity of our proof. 

Now consider an arbitrary path in $P_i$.
Label the endpoints of the shortest path $P_i$ by $s^{(a)}$ and $t^{(b)}$.
From Lemma \ref{sp}, $P_i$ must equal 
$s^{(a)}$--$s^{(b)}$--$w^{(b)}$--$t^{(b)}$, where $w$ is the lowest
common ancestor of $s$ and $t$, and $\lambda(w)=i$.

$L(s^{(a)}) \cap L(t^{(b)})$ must contain at least one of $s^{(a)}$, $s^{(b)}$, 
$w^{(b)}$, $t^{(b)}$ in order to satisfy the cover property.  By our 
assumption above, $s^{(a)},s^{(b)} \in L(s^{(a)})$ and
$w^{(b)}, t^{(b)} \in L(t^{(b)})$.  Now there are four cases.

Case 1: $s^{(a)} \in L(t^{(b)})$.  Note that $s^{(a)}$ is not on any other shortest
path starting at $t^{(b)}$.

Case 2: $t^{(b)} \in L(s^{(a)})$.  Again, $t^{(b)}$ is not on any other shortest
path starting at $s^{(a)}$.

Case 3: $s^{(b)} \in L(t^{(b)})$.  $s^{(b)}$ is on all leaf-leaf shortest paths (that end at $t^{(b)}$) of the
form $s^{(c)}$--$s^{(b)}$--$w^{(b)}$--$t^{(b)}$ for $c \neq b$.  There are $q-1$ 
such paths in $P_i$.

Case 4: $w^{(b)} \in L(s^{(a)})$.  $w^{(b)}$ is on all leaf-leaf shortest paths (that start at $s^{(a)}$) of the 
form $s^{(a)}$--$s^{(b)}$--$w^{(b)}$--$v^{(b)}$ for $v$ such that $\lambda (s,v) = i$.  There are $t^i - t^{i-1}$ such paths,
since there are $t^i$ leaves with $w^{(b)}$ as an ancestor, and all but $t^{i-1}$ of those leaves have $w^{(b)}$
as the lowest height ancestor to get to $s^{(a)}$.

Furthermore, 
\begin{equation}
|P_i| = {q \choose 2} t^k (t^i - t^{i-1}) = \frac{q(q-1) t^k (t^i - 
t^{i-1})}{2}
\end{equation}
because there are ${q \choose 2}$ ways to pick two copies of trees, 
$t^k$ choices for the first leaf, and $t^i - t^{i-1}$ choices for the second leaf
(in order to guarantee that the leaves have a lowest common ancestor of height $i$).

So if we assume $t^i-t^{i-1} \geq q-1$ (we will explain in the next paragraph why we can
make this assumption), then we can achieve a lower bound on the number of labels
needed for $P_i$ by exclusively using Case 4 for our choice of labels.
\begin{equation}
\frac{q(q-1) t^k (t^i - t^{i-1})}{2}\div(t^i - t^{i-1})=\frac{q(q-1) t^k }{2}.
\end{equation}

Therefore, the contribution of $P_i$ to the total sum $H$ is at least $\frac{q(q-1) t^k }{2}$.
For all $i$, the hubs that $P_i$ contributes to the sum $H$ have height $i$,
ensuring that a node does not get double counted in $H$. 

Let $k=\lceil \frac {\log D}{4} \rceil$, $q=h$, and pick $t$ big enough such that $q t^{k+1}\geq
n$ (ensuring that $|V| \geq n$) and $t^{k/2} \geq q$ (ensuring that at least half
of the $P_i$'s satisfy $t^i-t^{i-1} \geq q-1$).

Then the highway dimension of $G$ is $h$ and the diameter is $\Theta (D)$.  Recall 
that $|V| \in \Theta(q t^k)$.
Then for any given labeling $L$, 
\begin{equation}
\sum_{v \in V} |L(v)| \geq \frac{k}{2}  \cdot  
\frac{q(q-1) t^k}{2} \in \Omega (h |V| \log D).
\end{equation}

This completes the proof since query times depend on the size of the labels. \qed
\end{proof}

With this theorem, the upper bound presented in \cite{hdnew}
becomes tight.

\subsection{Hardness of preprocessing}

In 2010, Bauer et al.\ established hardness for optimal preprocessing for
a variety of the best routing algorithms, including 
{\sc contraction hierarchies} and {\sc transit node routing} \cite{hard}.
We provide hardness for optimal preprocessing in
{\sc hub labeling} which was suspected to be true \cite{hd}.
By optimal preprocessing, we mean minimizing the maximum hub size over
all vertices.
Babenko et al.\ very recently established hardness for nearly the same
problem,
but they defined optimal preprocessing as minimizing over the \emph{total} label
size \cite{babenko}.
Our definition of optimal corresponds to minimizing the
maximum query time, whereas the other definition corresponds to
minimizing the average query time.

We will switch to directed graphs, which was the original setting of
{\sc hub labeling} \cite{hl}. The main difference is that each node $v$
has a forward label $L_f(v)$ and a reverse label $L_r(v)$,
and the cover property states that for a directed $s$--$t$ query,
$L_f(s)\cap L_r(t)$ is not empty.

Now we formally define the problem {\sc minimum hub labeling} (MHL)
as follows:
\\
\\
\textbf{Problem (MHL).} Given a directed graph $G=(V,A)$ and an integer $k$, 
find a labeling $L$ satisfying the cover property
such that $\max_{v\in V} (\max(|L_f(v)|,|L_r(v)|))\leq k$.
\\
\indent

We will show a reduction from 
a classical NP-hard problem, \textit{exact cover by 3-sets (X3C)}.
In an X3C instance $(U,C)$,
$U$ is a set of elements, 3 divides $|U|$, and $C$ is a set of triples of $U$. 
The problem is whether there exists a set $C'\subseteq C$,
$|C'|=\frac{|U|}{3}$ such that $C'$ covers $U$
(an exact 3-covering of $U$).

Here is an outline of the proof.
Given an X3C instance $(U,C)$, we create an MHL instance
$(G,k)$ where $G=(V,E)$, $U\cup C\subseteq V$ and
for $c\in C$, $u\in U$, $c$--$u\in E$ iff $u\in c$.

We also add a clique of vertices $\{b_1,\dots,b_{2k-1}\}=B$ with arcs to 
nodes in $U$,
whose sole purpose is to fill up the reverse labels of nodes in $U$.
Finally, we add two vertices $\{a_1,a_2\}=A$
with arcs to every node in $C$.

By filling up the reverse labels of nodes $u\in U$,
we force the nodes $a\in A$ to use nodes in $C$ or $U$ for the hubs of
$a$--$u$ shortest paths.
And it is too inefficient to use nodes in $U$ for the hubs,
so nodes in $C$ must act as the hubs.
Then in order for $A$'s label size to stay $\leq k$, there must be an
exact cover for $U$.

\begin{theorem} \label{hard}
Minimum hub-labeling is NP-hard.
\end{theorem}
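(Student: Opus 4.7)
The plan is to carry out the reduction from exact cover by 3-sets (X3C) that the preceding discussion sketches. Given an X3C instance $(U,C)$, I will construct a directed graph $G = (V,E)$ with vertex set $V = U \cup C \cup A \cup B$, where $A = \{a_1, a_2\}$ and $|B| = 2k - 1$, together with arcs $c \to u$ whenever $u \in c$, $a_i \to c$ for every $a_i \in A$ and $c \in C$, a regular tournament orientation on $B$, and $b \to u$ for every $b \in B$ and $u \in U$. Setting $k = |U|/3 + 1$ (with a small additive constant absorbed if needed to accommodate slack in the counting below), the goal is to show that $(U,C)$ admits an exact $3$-cover if and only if the MHL instance $(G,k)$ admits a labeling of maximum label size at most $k$.

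For the forward direction, given an exact cover $C' \subseteq C$ of size $|U|/3$, I would exhibit the explicit labeling $L_f(a_i) = \{a_i\} \cup C'$, $L_r(a_i) = \{a_i\}$, $L_f(c) = \{c\} \cup c$, $L_r(c) = \{c, a_1, a_2\}$, $L_f(u) = \{u\}$, and $L_r(u) = \{u, c^\ast(u)\} \cup B_u$, where $c^\ast(u)$ is the unique member of $C'$ containing $u$ and $B_u \subseteq B$ is sized so that every $(b,u)$ pair is covered by either $b \in L_r(u)$ or $u \in L_f(b)$. A regular tournament on $B$, together with a complementary assignment of the remaining label slots of $b \in B$ to $U$-entries, handles both the clique and the $(b,u)$ pairs. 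The cover property is then verified case by case according to the type of arc-pair.

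The reverse direction is the technical heart of the argument. Given any valid labeling of maximum size $k$, I would first show via a global double-counting over all $(b,u)$ pairs --- using $|B| = 2k-1$ and the per-vertex bounds $|L_f(b)|, |L_r(u)| \le k$ --- that every $L_r(u)$ is so nearly saturated by $B$-elements that it leaves no room for either $a_1$ or $a_2$. Consequently, for each $u \in U$ the cover property for the $a_1 \to u$ shortest path must be witnessed by some $c \ni u$ lying in $L_f(a_1) \cap L_r(u)$ or by $u$ itself lying in $L_f(a_1)$. Writing $m = |L_f(a_1) \cap C|$ and $\ell = |L_f(a_1) \cap U|$, the inequalities $3m + \ell \ge |U|$ (every $u$ is covered) and $m + \ell \le k - 1 = |U|/3$ (label capacity, minus the slot consumed by $a_1$ itself) together force $\ell = 0$ and $m = |U|/3$, so the $C$-nodes in $L_f(a_1)$ form the desired exact cover.

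The main obstacle will be calibrating $|B|$ and $k$ so that the double-counting step in the reverse direction is simultaneously tight enough to preclude the ``cheat'' of placing some $a_i$ into $L_r(u)$ and loose enough to admit the labeling exhibited in the forward direction. If the natural choice $|B| = 2k-1$ with $k = |U|/3 + 1$ does not balance exactly, I would absorb the gap by raising $k$ by a small additive constant (and adjusting $|B|$ correspondingly) or by appending a handful of auxiliary forcing vertices to $G$; neither modification affects the NP-hardness conclusion, and both the reduction size and the verification of the cover property remain polynomial in $|U| + |C|$.
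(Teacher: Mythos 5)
Your high-level plan matches the paper's: reduce from X3C with $A$ forced to route its queries to $U$ through hubs in $C$, using $B$ to exhaust the reverse labels of $U$. But two of your concrete design choices break the reduction, and the flagged ``calibration'' obstacle is not fixable by the remedies you propose. First, you put an arc $b\to u$ for \emph{every} $b\in B$ and $u\in U$. Each such pair must consume a dedicated slot in either $L_f(b)$ or $L_r(u)$ (a hub $b\in L_r(u)$ or $u\in L_f(b)$ covers only that one pair), so at least $|B|\cdot|U|=(2k-1)\cdot 3(k-1)\approx 6k^2$ slots are needed inside labels whose total capacity is at most $(|B|+|U|)k\approx 5k^2$. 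Hence $(G,k)$ is a {\sc no} instance for all large $k$ regardless of the X3C answer, and raising $k$ by an additive constant cannot close a $\Theta(k^2)$ deficit. The paper avoids this by connecting only a subset $B'\subseteq B$ with $|B'|=k-2$ to $U$, and by making $B$ a \emph{bidirected} clique rather than a tournament: the $|B|(|B|-1)$ intra-clique paths exactly saturate every $L_f(b)$ and $L_r(b)$ with $B$-vertices (your regular tournament only half-fills them), which is what forces each pair $(b',u)$ to be covered on the $u$ side and plants $B'$ inside $L_r(u)$.

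Second, your endgame assumes $L_r(u)$ can be made so full of $B$-elements that \emph{neither} $a_1$ nor $a_2$ fits. That requires $k-1$ forced $B$-elements in $L_r(u)$; but $L_r(u)$ also needs a hub for the $c\to u$ paths and one for the $a\to u$ paths, so with $k-1$ forced elements the explicit labeling in your forward direction no longer fits in $k$ slots either. The construction can only force $k-2$ elements, which leaves room for \emph{one} of $a_1,a_2$ in $L_r(u)$, and ruling out this hybrid labeling is precisely the step your proposal is missing. The paper handles it with a separate counting lemma: setting $A_i=|L_f(a_i)\cap U|$ and $U_i=|\{u: a_i\in L_r(u)\}|$, one shows $3A_1\le U_1\le A_2$ and $3A_2\le U_2\le A_1$, hence $9A_1\le A_1$ and $A_1=A_2=U_1=U_2=0$. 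Only after that does your final inequality pair $3m+\ell\ge|U|$, $m+\ell\le k-1$ (which is correct and matches the paper's concluding corollary) become applicable with $\ell=0$ forced.
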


First we construct a graph $G$, and then prove lemmas about its labeling
until we work up to proving the theorem.

Given an X3C instance $(U,C)$, we create an MHL instance
$(G,k)$ where $G=(V,E)$, 
$V=A \cup C \cup U \cup B$, $|A|=2$, and $|B|=\frac{2}{3}|U|+1$.  
For all $a \in A$ and $c \in C$, there is a directed
edge $(a,c) \in E$.  For all $u \in U$ and $c \in C$ such that $u \in c$, there is
a directed edge $(c,u) \in E$.  For all $b_1,b_2 \in B$ such that 
$b_1 \neq b_2$, $(b_1,b_2)$ and $(b_2,b_1)$ are
in $E$.
Let $B'$ be a subset of $B$ such that $|B'|=\frac{|U|}{3}-1$ 
(it does not matter which $b$'s are in $B'$).
For all $b' \in B'$ and all $u \in U$, there is a directed edge $(b',u) \in E$.
All edges are unit length.
Finally, set $k=\frac{|U|}{3}+1$.
See Figure \ref{fig:maxlabel}.

\begin{figure}  \label{fig:maxlabel}
\begin{center}  
\includegraphics[height=2.5in,width=4.91in]{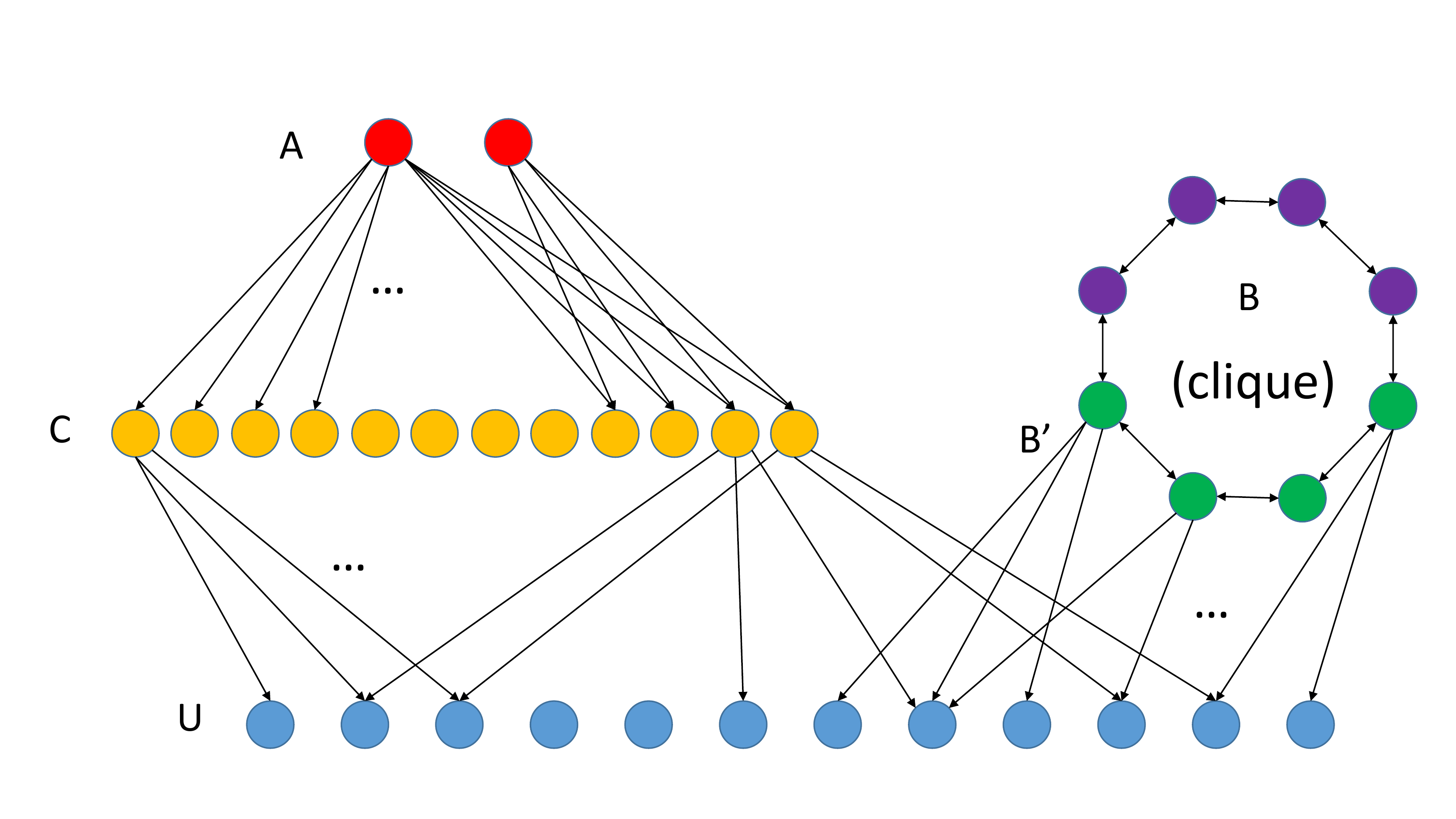}  
\caption{\small \sl The MHL instance constructed from X3C.}
\end{center}  
\end{figure}

First we prove the forward direction: if $(G,k)$ is a {\sc yes}
instance, then $(U,C)$ is a {\sc yes} instance.
We prove this using a few different lemmas.

\begin{lemma} \label{b}
If $(G,k)$ is a {\sc yes} instance,
then for all $b\in B$, $L_f(b)$ and $L_r(b)$ contain $k$ vertices
from $B$.
\end{lemma}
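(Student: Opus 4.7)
The key observation is that $B$ is a clique on $2k-1$ vertices with bidirectional unit-length edges, so for any two distinct $b,b'\in B$ the unique shortest path from $b$ to $b'$ is the single edge $(b,b')$, whose only vertices are $b$ and $b'$. Consequently the cover property for the pair $(b_i,b_j)$ with $i\neq j$ forces $L_f(b_i)\cap L_r(b_j)$ to contain $b_i$ or $b_j$, so every directed $B$-pair must be ``served'' by one of its two endpoints.

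First I would establish that $b\in L_f(b)$ and $b\in L_r(b)$ for every $b\in B$. If $b_0\notin L_f(b_0)$, then for each of the $2k-2$ other vertices $b'\in B$, the pair $(b_0,b')$ must be served by $b'$, forcing $b'\in L_f(b_0)$ and giving $|L_f(b_0)|\geq 2k-2>k$ whenever $k\geq 3$, contradicting the YES assumption. (The degenerate case $k=2$ corresponds to $|U|=3$, where X3C is trivial, so we may assume $k\geq 3$.) The argument for $L_r$ is symmetric.

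With self-inclusion in hand, the cover property for each ordered pair $(b_i,b_j)$ with $i\neq j$ reduces to ``$b_j\in L_f(b_i)$ or $b_i\in L_r(b_j)$.'' Summing these $(2k-1)(2k-2)$ constraints and accounting separately for the $2(2k-1)$ mandatory self-entries, I would derive
\[ \sum_{b\in B}|L_f(b)\cap B| \;+\; \sum_{b\in B}|L_r(b)\cap B| \;\geq\; (2k-1)(2k-2) + 2(2k-1) \;=\; (2k-1)(2k). \]
The YES assumption bounds each summand above by $k$, so the left-hand side is at most $2(2k-1)k=(2k-1)(2k)$. Equality holds throughout, forcing $|L_f(b)\cap B|=|L_r(b)\cap B|=k$ for every $b\in B$, which is the claim.

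The main subtlety is the self-inclusion step: without ruling out $b\notin L_f(b)$, the double-counting would not be tight, and the cover constraints could not be aggregated so cleanly. Once self-inclusion is secured, the proof is a short extremal counting that exploits only the clique structure of $B$ with unit edge weights; the rest of the graph ($A$, $C$, $U$) plays no role.
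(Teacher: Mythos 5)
Your proof is correct and follows essentially the same route as the paper's: the clique structure of $B$ forces each ordered pair to be served by a dedicated, non-reusable endpoint entry, self-inclusion is secured by the same contradiction ($|L_f(b_0)|\geq 2k-2>k$), and the final step is the same double-counting showing the $(2k-1)(2k)$ required entries exactly saturate the budget of $2(2k-1)$ labels of size $k$. Your explicit handling of the degenerate case $k=2$ (i.e.\ $|U|=3$) is a small point of extra care that the paper glosses over.
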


\begin{proof}
Given $b_1$, $b_2 \in B$,
the shortest $b_1$--$b_2$ path is the edge $b_1$--$b_2$, 
since $B$ is fully connected.  
Then to satisfy the cover property, either 
$b_1 \in L_f(b_1) \cap L_r(b_2)$
or $b_2 \in L_f(b_1) \cap L_r(b_2)$.  Each of these cases puts one vertex in 
a label that cannot be reused for any other
shortest path ($b_1  \in L_r(b_2)$ or $b_2 \in L_f(b_1)$).

First we note that for all $b\in B$, $b \in L_f(b)$ and $b \in L_r(b)$.
If this were not the case, (WLOG $b \notin L_f(b)$),
then $L_f(b)$ must contain
$|B|-1=\frac{2}{3}|U|>k$ vertices to satisfy all its outgoing shortest paths,
which contradicts our assumption.

Now we note there are $|B|(|B|-1)$ total shortest paths, 
and each requires adding exactly one node to a label that cannot be reused 
for any other shortest path.
Then the minimum max label we can hope to achieve is $\frac{|U|}{3}+1$,
which corresponds to splitting the $|B|(|B|-1)$ vertices equally among $B$.
So each forward and reverse label has size
$\frac{1}{2}\frac{|B|(|B|-1)}{|B|}=|U|/3$), plus the self hub to reach a total
of $\frac{|U|}{3}+1=k$. \qed
\end{proof}

\begin{corollary} \label{b'}
If $(G,k)$ is a {\sc yes} instance,
then for all $u\in U$, $B'\subseteq L_f(u)$.
\end{corollary}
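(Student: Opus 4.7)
The approach is to apply the cover property to the length-one shortest paths $b'\to u$ and combine it with the saturation of $L_f(b')$ established by Lemma \ref{b}. The first step is to observe that for any $b'\in B'$ and $u\in U$, the edge $b'\to u$ is the unique shortest path from $b'$ to $u$, so the hub witnessing the cover property for this query must be one of the two endpoints, either $b'$ itself or $u$ itself.

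The second step is to invoke Lemma \ref{b} to rule out $u$ as a candidate hub. Because $(G,k)$ is a yes instance, every $b\in B$ satisfies $|L_f(b)|=k$ with $L_f(b)\subseteq B$. In particular $u\notin L_f(b')$, which means $u$ cannot sit in the intersection $L_f(b')\cap L_r(u)$ and so cannot serve as the covering hub of the $b'\to u$ query. The only remaining possibility is that $b'$ itself is that hub, which by the cover property plants $b'$ into the label attached to $u$ on this query, namely $L_f(u)$ in the paper's notation. Iterating across all $b'\in B'$ yields $B'\subseteq L_f(u)$.

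The essential (and only nontrivial) ingredient is the capacity argument of Lemma \ref{b}: it is precisely the tightness of the maximum-label bound on the clique $B$ that leaves no room in $L_f(b')$ for any vertex outside $B$. Once that saturation is in hand, the corollary reduces to a one-line case analysis on the two-vertex shortest path $b'\to u$, with the $B$-versus-$U$ disjointness doing all the work to force the conclusion.
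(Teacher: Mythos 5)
Your argument is essentially the paper's: apply the cover property to the single-edge shortest path between $b'$ and $u$, and use the saturation of $b'$'s labels from Lemma~\ref{b} to rule out $u$ as the witnessing hub, thereby forcing $b'$ into $u$'s label. One wrinkle is worth flagging. The construction puts the directed edge $(b',u)$ into $E$, so the only query between these two vertices is the $b'$--$u$ query (there is no path from $u$ back to $b'$); under the stated cover-property convention (a directed $s$--$t$ query needs a hub in $L_f(s)\cap L_r(t)$), ruling out $u\in L_f(b')$ forces $b'\in L_r(u)$, not $L_f(u)$ as your last sentence asserts --- the ``label attached to $u$'' for this query is its \emph{reverse} label. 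The $L_f(u)$ in the corollary's statement appears to be a slip in the paper itself: the paper's own proof orients the path as $u$--$b'$ (a path that does not exist in the digraph as constructed), while the subsequent Lemma~\ref{only c} uses the fact that $L_r(u)$ contains the $|B'|=k-2$ vertices of $B'$, and the explicit labeling in Lemma~\ref{hard backward} places $B'$ in $L_r(u)$ --- which is exactly what your correctly oriented argument delivers. So the substance of your proof is right and matches the paper's approach; just state the conclusion as $B'\subseteq L_r(u)$ (or note the typo in the statement), since that is the form actually used downstream.
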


\begin{proof}
Given $u\in U$, $b'\in B'$, the shortest $u$--$b'$ path is the edge
$u$--$b'$.
Then to satisfy the cover property,
either $u\in L_f(u)\cap L_r(b')$, or $b'\in L_f(u)\cap L_r(b')$.
From Lemma \ref{b}, we know that $L_r(b)$ already contains $k$
vertices from $B$. Therefore, it must be the case that $b'\in L_f(u)$.
Then for all $u\in U$, $B'\subseteq L_f(u)$. \qed
\end{proof}

So, now we know that the reverse labels for nodes in $U$ are almost full up.
To finish off the forward direction, 
we need to show that the only way for vertices in $A$ to have hubs $\leq k$
is to use an exact cover $C'$ for $U$.
Intuitively, it makes sense that the $a$--$c$--$u$ shortest
paths should use the vertices in $C$ as hubs rather than vertices in 
$A$ or $U$, because it can be used for three shortest paths
instead of just one. However, we need to make certain
that some hybrid label with $A$'s, $C$'s, and $U$'s does not work.

Define
$A_1=|\{u\in U\mid u\in L_r(a_1)\}|$ and
$A_2=|\{u\in U\mid u\in L_r(a_2)\}|$.

Also let $U_1=|\{u\in U\mid a_1\in L_r(u)\}|$, and 
$U_2=|\{u\in U\mid a_2\in L_r(u)\}|$.

\begin{lemma} \label{only c}
If $(G,k)$ is a {\sc yes} instance,
then $A_1=A_2=U_1=U_2=0$.
\end{lemma}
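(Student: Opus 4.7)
The core observation is that a vertex $c \in C$ placed in $L_f(a_i)$ serves as a hub for up to three $a_i$--$u$ shortest paths (one for each $u \in c$) per entry, whereas a $u \in L_f(a_i)$ or an $a_i \in L_r(u)$ covers only one such path per entry; so the $c$-hub strategy is three times as efficient. The budget $k = |U|/3 + 1$ is just tight enough for a pure $c$-hub strategy when an exact cover exists, and the plan is to argue that any use of $A$- or $U$-hubs overflows some label.

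First I would prove $A_i \leq U_i/2$ by counting entries of $L_f(a_i)$. Let $p$ denote the number of $c$-hubs in $L_f(a_i)$. Each of the $|U|$ shortest paths $a_i \to c \to u$ must be covered either by the self-hub $a_i \in L_r(u)$ (contributing $1$ to $U_i$), by a triple-hub $c \in L_f(a_i) \cap L_r(u)$ with $u \in c$ (covering up to $3$ paths per $c$), or by a destination-hub $u \in L_f(a_i)$ (contributing $1$ to $A_i$). Combining the coverage inequality $3p + A_i + U_i \geq |U|$ with the size bound $1 + p + A_i \leq k = |U|/3 + 1$ yields $A_i \leq U_i/2$. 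In particular, it suffices to establish $U_i = 0$.

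To derive $U_i = 0$ I would combine the above with a dual accounting on $L_r(u)$ summed over $u \in U$. Each $a_i \in L_r(u)$ consumes a slot and covers only the single path $a_i \to u$, whereas each $c \in L_r(u)$ with $u \in c$ simultaneously covers the $c \to u$ path and, for each $i$ with $c \in L_f(a_i)$, the $a_i \to c \to u$ path. Aggregating $\sum_u |L_r(u)|$, $|L_f(a_1)| + |L_f(a_2)|$, and a term tracking $\sum_c |L_f(c)|$ so that every relevant shortest path (including the $a$--$c$ and $c$--$u$ paths) is charged on the coverage side exactly once, I expect to reach a combined inequality whose slack is a nonnegative combination of $U_i$ and $-A_i$; together with $A_i \leq U_i/2$, this should force $U_1 = U_2 = 0$, whence $A_1 = A_2 = 0$.

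The main obstacle is the second step. Because the coverage of a $c \to u$ path can be shifted from $L_r(u)$ into $L_f(c)$ (and, analogously, an $a_i \to c$ path can be shifted between $L_f(a_i)$ and $L_r(c)$), a per-vertex bound on $L_r(u)$ is not by itself tight, so the aggregation must include $L_f(c)$ and $L_r(c)$ as well, with each shortest path charged to exactly one label. Setting up this inequality without over- or under-counting is the delicate portion, but the contradiction should come from the fact that the budget $k = |U|/3 + 1$ is simultaneously tight for $L_f(a_i)$, $L_f(c)$, and $L_r(u)$ only when every $a$--$u$ path is covered by a $c$-hub, i.e., when $U_i = A_i = 0$.
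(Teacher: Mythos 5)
Your first step is sound and parallels the first half of the paper's argument: bounding the number of triple-hubs in $L_f(a_i)$ by $k-1-A_i$ and comparing against the $|U|$ paths to be covered yields a positively-signed relation between $A_i$ and $U_i$ (the paper derives $3A_i\le U_i$; your $2A_i\le U_i$ is the same idea and suffices). The gap is your second step, which you leave as a hope rather than an argument, and whose intended shape would not close the proof. Two things are missing. First, your proposed aggregation over $L_r(u)$, $L_f(c)$, $L_r(c)$ never invokes Corollary~\ref{b'}: the entire purpose of the gadget $B$ is that $B'$ occupies $k-2$ of the $k$ slots of every $L_r(u)$, and one of the two remaining slots must be spent covering the $c$--$u$ edges. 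Without this saturation there is no tension on the $L_r(u)$ side at all --- each $L_r(u)$ would have roughly $|U|/3$ free slots against a handful of incoming paths --- so no aggregate inequality of the kind you describe can be extracted.

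Second, the inequality you aim for (slack that is ``a nonnegative combination of $U_i$ and $-A_i$,'' i.e.\ essentially $U_i\le c\,A_i$ for the \emph{same} $i$) is not obtainable and is not how the argument closes. Placing $a_1$ into many reverse labels $L_r(u)$ costs $L_f(a_1)$ nothing beyond the single self-hub slot, so $U_1$ cannot be controlled by $A_1$ alone; the tension is between the \emph{two} vertices of $A$ competing for the single free slot of each $L_r(u)$. That is the paper's key step: after $B'$ and the $c$--$u$ cover, at most one of $a_1,a_2$ fits in $L_r(u)$, and if $a_1$ takes that slot the $a_2$--$u$ path must be covered from the other side by $u\in L_f(a_2)$, giving the cross inequalities $U_1\le A_2$ and $U_2\le A_1$. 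Chaining these with the step-one bounds gives $3A_1\le U_1\le A_2$ and $3A_2\le U_2\le A_1$, hence $9A_1\le A_1$, forcing $A_1=A_2=U_1=U_2=0$. Your proposal contains neither the role of $B'$ nor this cross-coupling, and those are precisely the two ideas the lemma turns on.
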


\begin{proof}
From Lemma \ref{b'}, we know that for all $u\in U$, 
$L_r(u)$ contains $|B'|=k-2$ vertices from $B'$.
$L_r(u)$ will also need at least one total vertex for all the 
$c$--$u$ shortest paths, for $c$ such that $u\in c$. 
Therefore, we cannot put both $a_1$ and $a_2$ into $L_r(u)$. 

Then, for every $u\in U$ such that $a_1\in L_r(u)$,
$u$ must be in $L_f(a_2)$, or else there would be no other way for the
$a_2$--$u$ path to satisfy the cover property. Therefore, $U_1\leq A_2$.
Similarly, $U_2\leq A_1$.

Now consider $a_1$'s forward label.
$L_f(a_1)$ will need at least one total vertex for all the 
$a$--$c$ shortest paths.
Since there are $A_1$ vertices $u\in U$ such that $u\in L_f(a_1)$,
there is room in $L_f(a_1)$ for $k-1-A_1$ vertices, and we need to
satisfy the cover property for $|U|-U_1$ more shortest paths of the
form $a_1-u$ such that $u\in U$.
The most efficient label for these shortest paths is to pick a vertex
in $C$, which will cover three at a time.
Then we must have $|U|-U_1\leq 3(k-1-A_1)=|U|-3A_1$, from which it follows
that $3A_1\leq U_1\leq A_2$.
With the exact same argument, we get $3A_2\leq A_1$.
Then $9A_1\leq A_1$ and so $A_1=0$.\qed
\end{proof}

\begin{corollary} \label{hard forward}
If $(G,k)$ is a {\sc yes} instance,
then $(U,C)$ is a {\sc yes} instance.
\end{corollary}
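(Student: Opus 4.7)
The plan is to combine Lemma \ref{only c} with a short counting argument on $L_f(a_1)$. Heuristically, Lemma \ref{only c} forces every $a_1$--$u$ shortest path to be covered by a vertex of $C$, so the triples corresponding to $L_f(a_1) \cap C$ cover $U$; the bound $|L_f(a_1)| \leq k = |U|/3 + 1$ then pins this subcollection down to exactly $|U|/3$ triples, which is only possible if it is an exact cover.

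First I would record two immediate consequences of the preceding lemmas. By Lemma \ref{only c}, no vertex of $U$ lies in $L_f(a_1)$ and $a_1$ lies in no $L_r(u)$. In addition, the cover property applied to the trivial length-$0$ path $a_1$--$a_1$ forces $a_1 \in L_f(a_1)$.

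Next, for each $u \in U$, I would analyze the shortest $a_1$--$u$ path, which has length $2$ and traverses $a_1 \to c \to u$ for some $c \in C$ with $u \in c$ (if $u$ lies in several triples, several such paths exist). The hub covering this pair must therefore be $a_1$, $u$, or one of these intermediate $c$'s. The first is excluded because $a_1 \notin L_r(u)$, and the second because $u \notin L_f(a_1)$; hence the hub must be some $c \in C$ with $u \in c$, and in particular this $c$ lies in $L_f(a_1)$. Setting $C' = L_f(a_1) \cap C$, this shows $C'$ covers $U$.

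I would close with counting. Since $L_f(a_1)$ contains $a_1$, no vertex of $U$, and no vertex of $B$ (there is no directed path leaving $a_1$ toward $B$), we have $|C'| \leq |L_f(a_1)| - 1 \leq k - 1 = |U|/3$. Since each triple covers only three elements, covering all of $U$ demands $|C'| \geq |U|/3$, so $|C'| = |U|/3$ and $C'$ must be an exact $3$-cover. The only delicate part is ruling out $a_1$ and $u$ as hubs for the $a_1$--$u$ path, but this follows immediately from Lemma \ref{only c} together with the self-hub property, so the corollary ultimately reduces to a one-line counting argument built on top of that lemma.
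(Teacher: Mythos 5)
Your overall architecture matches the paper's: invoke Lemma~\ref{only c} to force every $a_1$--$u$ pair to be covered by some $c\in C$ with $u\in c$, collect these into $C'=L_f(a_1)\cap C$, and then squeeze $|C'|$ down to exactly $|U|/3$ so that a covering family of that size must be exact. The counting at the end is right, and it genuinely needs the ``$-1$'': a family of $|U|/3+1$ triples covering $U$ need not contain an exact cover (e.g.\ $U=\{1,\dots,6\}$, $C'=\{123,145,246\}$), so everything hinges on exhibiting one vertex of $L_f(a_1)$ outside $C'$.

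That is where your proposal has a genuine gap. You obtain $a_1\in L_f(a_1)$ from ``the cover property applied to the trivial length-$0$ path $a_1$--$a_1$,'' but the paper's cover property does not apply to $s=t$ queries. This is not a pedantic point: the paper proves $b\in L_f(b)$ in Lemma~\ref{b} by a counting argument rather than by any self-hub convention, and the explicit labeling in Lemma~\ref{hard backward} sets $L_r(a)=\emptyset$ and $L_f(u)=\emptyset$, which would violate the cover property on trivial queries if that convention were in force. So under your reading the backward direction of the reduction would be false, and under the paper's reading your justification for $a_1\in L_f(a_1)$ evaporates --- leaving only $|C'|\le k=|U|/3+1$, which by the counterexample above does not yield an exact cover. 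The repair is the one the paper uses: each edge $a_1$--$c$, $c\in C$, is itself a shortest path whose hub must be $a_1$ or $c$; if $a_1\notin L_f(a_1)$ then all of $C$ must sit in $L_f(a_1)$, which exceeds $k$ (for X3C instances with $|C|>|U|/3+1$, to which the problem restricts without loss of generality). With that substitution your argument goes through and coincides with the paper's proof.
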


\begin{proof}
From Lemma \ref{only c}, it follows that for all $a\in A$ and $u\in U$,
there exists a $c\in C$, such that $u\in c$ and $c\in L_f(a)\cap L_r(u)$.
Then there must be at least $\frac{|U|}{3}$ vertices from $C$ in $L_f(a)$.
$L_f(a)$ also needs a hub for all $a$--$c$ shortest paths where $c\in C$.
The only way to accomplish that is to let $a$ be the hub.
Then $L_f(a)$ contains $a$, plus some $C'\subseteq C$ such that for all
$u\in U$, there exists a $c\in C'$ such that $u\in c$.
Since $(G,k)$ is a {\sc yes} instance, $|C'|\leq k-1=\frac{|U|}{3}$.
But then $C'$ is an exact cover for $U$, so 
$(U,C)$ is a {\sc yes} instance.\qed
\end{proof}

Now we will show the backward direction.
Proving the forward direction alludes to a specific labeling,
so now it is just a matter of showing this labeling is actually possible.

\begin{lemma} \label{hard backward}
If $(U,C)$ is a {\sc yes} instance, then $(G,k)$ is a {\sc yes} instance.
\end{lemma}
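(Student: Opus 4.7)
The plan is to exhibit, given the exact $3$-cover $C'\subseteq C$ from the X3C YES-instance, an explicit labeling of $G$ whose maximum label size is exactly $k=|U|/3+1$. For each $u\in U$, let $c_u$ denote the unique triple in $C'$ that contains $u$. The labels I would define are: $L_f(a)=\{a\}\cup C'$ and $L_r(a)=\{a\}$ for $a\in A$; $L_f(c)=\{u\in U:u\in c\}$ and $L_r(c)=\{c,a_1,a_2\}$ for $c\in C$; and $L_f(u)=\{u\}$ and $L_r(u)=\{u,c_u\}\cup B'$ for $u\in U$. For the clique on $B$, I would order $B$ cyclically as $b_0,b_1,\ldots,b_{2k-2}$ with the $k-2$ elements of $B'$ spaced as evenly as possible, and take the rotational labels $L_f(b_i)=L_r(b_i)=\{b_i,b_{i+1},\ldots,b_{i+k-1}\}$ with indices taken modulo $2k-1$.

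The size bound $|L(v)|\le k$ is immediate from the definitions: $|L_f(a)|=|L_r(u)|=|L_f(b_i)|=|L_r(b_i)|=k$, and every other label has size at most $3\le k$. I would then verify the cover property by going through each kind of directed shortest path in $G$. For $a\to c$, use $c$ as the hub when $c\in C'$ (so $c\in L_f(a)\cap L_r(c)$) and $a$ otherwise. For $c\to u$, use $u$ itself, which is in $L_f(c)$ because $u\in c$ and in $L_r(u)$ as a self-hub. For $a\to u$, the shortest path $a\to c_u\to u$ has $c_u\in L_f(a)\cap L_r(u)$. For $b'\to u$ with $b'\in B'$, the self-hub $b'\in L_f(b')$ lies in $L_r(u)$ because $B'\subseteq L_r(u)$. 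Finally, for the clique edges $b_i\to b_j$, the rotational choice covers every pair because every residue $j-i\pmod{2k-1}$ lies in $[-(k-1),k-1]$, which is exactly the union of the two index windows defining $L_f(b_i)$ and $L_r(b_j)$.

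The main obstacle is handling the length-two paths $b\to b'\to u$ for $b\in B\setminus B'$: these require a hub in $L_f(b)\cap L_r(u)$ lying on such a path, and since $L_r(u)=\{u,c_u\}\cup B'$ is already full (it has exactly $k$ elements), the hub must be some $b'\in B'\cap L_f(b)$. This forces an additional combinatorial condition on the cyclic ordering, namely that no window of $k$ consecutive positions is disjoint from $B'$. Because $|B'|=k-2$ and $|B|=2k-1$, a balanced interleaving keeps the maximum run of non-$B'$ positions to $\lceil(2k-1)/(k-2)\rceil-1$, which is strictly less than $k$ whenever $k\ge 4$, so the requirement can be met simultaneously with the clique coverage. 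The remaining cases $k\le 3$ correspond to X3C instances with $|U|\le 6$, which are of constant size and therefore do not affect the NP-hardness conclusion.
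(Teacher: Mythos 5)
Your proof takes essentially the same route as the paper's: exhibit an explicit labeling achieving maximum label size $k$ and verify the cover property case by case, and your labels for $A$, $C$, and $U$ differ from the paper's only in inessential ways (you add harmless self-hubs to $L_r(a)$, $L_r(c)$, $L_f(u)$). The substantive difference is that your verification is more complete. First, you explicitly handle the length-two shortest paths $b\to b'\to u$ for $b\in B\setminus B'$, which the paper's case list omits (it only checks $b'$--$u$ paths with $b'\in B'$); since $L_r(u)$ is already full, these paths genuinely force $L_f(b)\cap B'\neq\emptyset$ for every $b\in B$, and your requirement that $B'$ be interleaved so that no window of $k$ consecutive positions in the cyclic order misses $B'$ is exactly the right fix, achievable for $k\ge 4$ as you compute (the finitely many smaller instances are irrelevant to NP-hardness). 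Second, your symmetric forward windows on the clique $B$ do cover every ordered pair, because the covered differences $[-(k-1),k-1]$ exhaust all residues modulo $2k-1$; the paper's forward window for $L_f(b_i)$ combined with a backward window for $L_r(b_j)$ reduces both hub options to the same condition $j-i\in[0,k]$ and so, as literally written, leaves some clique pairs uncovered. So your proposal is not a different proof strategy, but it repairs two details that the paper's "easy to check" step glosses over.
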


\begin{proof}
Let $C'\subseteq C$ be an exact cover for $C$.
Given $u\in U$, denote $c'_u$ as the element in $C'$ that covers $u$.
We present the following labeling $L$.

For $a\in A$, $L_f(a)=\{a\}\cup C'$, and $L_r(a)=\emptyset$.

For $c \in C$, $L_f(c)=\{u\mid u\in c\}$ and $L_r(c) = A$.

For $u \in U$, $L_f(u) = \emptyset$ and 
$L_r(u) = \{u,c'_u\} \cup B'$.

For $b_i \in B$, $L_f(b_i) = \{b_i,b_{i+1},\dots,b_{i+k \mod 2k}\}$ and \\
$L_r(b) = \{b_i,b_{i-1},\dots,b_{i-k\mod 2k}\}$.

It is easy to check this labeling satisfies the cover property.
Each $a$--$c$ shortest path uses $a$ as a hub.
Each $a$--$u$ shortest path uses $c'_u$ as a hub.
Each $c$--$u$ shortest path uses $u$ as a hub.
Each $b'$--$u$ shortest path uses $u$ as a hub.
Given $b_i,b_j\in B$. If $j\leq i+k \mod 2k$, then the $b_i$--$b_j$
shortest path uses $b_j$ as a hub. Otherwise, it uses $b_i$ as a
hub.

Also, it it clear that every label has size $\leq k$.
This completes the proof.\qed
\end{proof}

The proof of Theorem \ref{hard} follows directly from
Corollary \ref{hard forward} and Lemma \ref{hard backward}.

\section{Contraction Hierarchies}

{\sc Contraction hierarchies} \cite{ch} is a shortcut-based algorithm, making it fundamentally
different from {\sc hub labeling}.
It works by running bidirectional Dijkstra search, pruning the searches
based on a node's importance.

In this section, we explain how the {\sc contraction hierarchies} algorithm works, 
prove a lower bound on the query time, and then generalize a result about
the number of shortcut edges added in the preprocessing phase.

\subsection{The algorithm}

In the preprocessing stage for {\sc contraction hierarchies}, we iteratively \textit{contract}
nodes using a predefined ordering, called a 
\textit{contraction ordering}.  
The contraction operation called on $v$ first deletes $v$ from the graph,
and then may add edges between $v$'s neighbors if they are needed 
to preserve the shortest path lengths. Any such edge is put into a set $E^+$.
We contract every node in the graph based on the ordering, and we are left
with the set $E^+$ of ``shortcut edges''.

To run an $s$--$t$ query, run bidirectional Dijkstra search from $s$ and $t$
on the graph $G^+=(V,E \cup E^+)$, 
except at node $v$, only consider edges $v$--$w$ in which $w$ 
was contracted after $v$.  When there are no more nodes to consider in either direction, find the
node $v$ that minimizes the sum of its distances to $s$ and to $t$.  

In \cite{ch}, it is proven that $v$ is guaranteed to be on the shortest path
between $s$ and $t$, which means that $\mbox{dist}(s,t)=\mbox{dist}(s,v)+\mbox{dist}(v,t)$,
so the query returns the shortest $s$--$t$ path.

Note that any contraction ordering
will give correct queries, 
but a better contraction ordering will make $|E^+|$ small,
decreasing time and space requirements.
Finding the optimal
ordering is NP-hard \cite{hard}, but there are fast heuristics that make $|E^+|$ 
within $\log h$ of optimal \cite{hdnew}.

Abraham et al.\ showed an upper bound on the query time of 
{\sc contraction hierarchies} 
that depends on $\Delta$:
$O((\Delta+h \log D)(h \log D))$ \cite{hd}.
Using the new definition of highway dimension, 
Abraham et al.\ achieved the better bound of
$O((\hat h \log D)^2)$ time.
Both of these assume optimal preprocessing.
If a polynomial time preprocessing algorithm is required, the bounds are modified
to $O((\hat h \log \hat h \log D)^2)$ and $O((\Delta+h \log h \log D)(h \log h \log D))$.

\subsection{Lower bounding the query time} \label{chlb}

We show a lower bound using the old definition of highway dimension.

\begin{theorem} \label{ch lb}
For all $h$, $D$, $n$, there is a graph $G=(V,E)$ with highway dimension $h$, diameter
$\Theta (D)$, and $|V| \geq n$ such that the average query time is
$\Omega ((h \log D)^2)$ for {\sc contraction hierarchies}.
\end{theorem}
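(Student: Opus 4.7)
The plan is to use the graph family $G_{t,k,q}$ with $k=\lceil (\log D)/4 \rceil$, $q=h$, and $t$ chosen so that $|V|\geq n$, exactly as in the proof of Theorem \ref{hl lb}. By Lemma \ref{gtkq hd} this graph has highway dimension $h$, diameter $\Theta(D)$, and the required number of vertices.

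The first step is to reduce the problem to a hub-labeling style counting argument. For any contraction ordering, let $S_v$ denote the upward search space from $v$ in $G^+=(V,E\cup E^+)$. For any query $s$--$t$, the meeting point of bidirectional CH search is the highest-ranked vertex on the shortest $s$--$t$ path, so it must lie in $S_s\cap S_t$. Thus $\{S_v\}_{v\in V}$ satisfies exactly the cover property used in Theorem \ref{hl lb}, and its proof goes through verbatim to give $\sum_{v\in V}|S_v|=\Omega(h|V|\log D)$, i.e.\ the average search-space size is $\Omega(h\log D)$.

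The second step is to upgrade this vertex count into an edge-relaxation count, since the CH query time is dominated by the number of upward edges scanned by Dijkstra in both search directions. Here I would exploit the dense shortcut structure of $G_{t,k,q}$: every proper ancestor-descendant pair is already adjacent in $G$, and every pair of copies of the same tree node is already adjacent. For a typical leaf $s^{(a)}$ whose search space contains the $qk$ copies of its ancestors, the ancestral chain within each copy contributes $\Omega(k^2)$ pairwise upward edges, and the cross-copy structure contributes $\Omega(q^2)$ upward edges per height level. Combined with the fact that preprocessing must add any shortcut whose removal would break a shortest path (Lemma \ref{sp}), this should force $|E(S_{s^{(a)}})|=\Omega((qk)^2)=\Omega((h\log D)^2)$ for a constant fraction of leaves.

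The main obstacle will be making this second step uniform over all contraction orderings, since the adversary chooses the ordering to minimize $|E^+|$ and the induced search-space edge count. I would address this exactly as in the proof of Theorem \ref{hl lb}: partition leaf-leaf queries by the LCA height $i$ into the sets $P_0,P_1,\dots,P_k$, apply Lemma \ref{sp} to each class to identify the particular shortcut edges that any ordering must produce in $G^+$ to serve the shortest paths of that class, and then use a pigeonhole over the $\Omega(\log D)$ levels together with the $q=h$ copies to extract the $\Omega((h\log D)^2)$ factor on average. The final bound on average query time then follows by dividing the total edge-relaxation count by $|V|^2$.
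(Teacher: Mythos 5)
Your first step (viewing the CH upward search spaces $S_v$ as a hub labeling satisfying the cover property, and importing the $\sum_v|S_v|=\Omega(h|V|\log D)$ bound from Theorem~\ref{hl lb}) is sound, but it only bounds the number of \emph{vertices} scanned, and your second step --- the upgrade to an edge count --- has a genuine gap. The edges you identify as ``already adjacent in $G$'' do not suffice: the within-copy ancestor--descendant cliques contribute $\Theta(k^2)$ upward edges per copy, hence $\Theta(qk^2)$ in total over the search space, and the cross-copy cliques contribute $\Theta(q^2)$ per height level, hence $\Theta(kq^2)$ in total. These quantities \emph{add} rather than multiply, giving $\Theta(qk(q+k))=o\bigl((qk)^2\bigr)$ when both $q$ and $k$ grow. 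The missing $\Omega(q^2k^2)$ mass lives entirely in the \emph{shortcut} edges of the form $v^{(d)}$--$u^{(c)}$ with $c\neq d$ and $u$ a proper ancestor of $v$, and whether such a shortcut exists depends on the contraction ordering (it is created iff $v^{(c)}$ is contracted before $v^{(d)}$, the criterion from~\cite{milo}). Your plan to control arbitrary orderings ``exactly as in the proof of Theorem~\ref{hl lb}'' does not work here: that proof is a covering argument over which vertices must appear in labels, whereas what you need is a combinatorial argument about which shortcuts the contraction process is forced to create, and the cover property says nothing about that.

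The paper's proof supplies precisely this missing piece. It first fixes Abraham et al.'s height-based ordering, identifies the $qk/4$ visited nodes in the bottom half of the tree that have lower contraction rank than half of their cross-copy twins, and uses the shortcut criterion to show each such node forces $\Omega(qk)$ scanned shortcuts, yielding $\Omega(q^2k^2)$ per forward search. It then handles arbitrary orderings by a separate case analysis: contracting a node $v^{(a)}$ early destroys at most $t^{\lambda(v)}$ descendant-to-$v$ shortcuts (weighted by $\sum_u t^{\lambda(u)}$ to account for how many queries each affects) but simultaneously creates $\Omega(t^{\lambda(v)})$ new shortcuts among the not-yet-contracted descendants, so the total can drop by at most a constant factor. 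To repair your proposal you would need to reproduce something equivalent to this trade-off argument; the reduction to Theorem~\ref{hl lb} cannot substitute for it.
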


Our strategy will be to find a lower bound assuming Abraham et al.'s
(optimal) ordering, and then show that modifying the ordering can only increase the runtime.

\cite{milo} provided a criterion for shortcut paths
in the optimal ordering:
the path $s^{(a)}$--$s^{(b)}$--$w^{(b)}$--$t^{(b)}$ is shortcut 
if and only if $a \neq b$, $w$ is a proper
ancestor of $s$, and $s^{(b)}$ is contracted before $s^{(a)}$.
First we present a proof sketch, and then we give the formal proof.

Here is an outline of the proof.
Again we will use $G_{t,k,q}$, and we limit our analysis to leaf-leaf
queries, which make up the majority of all queries.
First we prove the theorem assuming Abraham et al.'s contraction order.
For $G_{t,k,q}$, this means nodes are contracted based on their height
in the tree.
In the forward search of a leaf-leaf query 
$s^{(a)}$--$t^{(b)}$,
the only nodes we may visit are ancestors $v^{(c)}$ of $s$ 
such that $v^{(c)}$ is contracted after $v^{(a)}$.
Then half of these nodes will have lower contraction order than the other half,
and so it can be shown that the shortcut criterion guarantees $\Omega(q^2 k^2)$
edges will be created along half of the forward searches.

Then we show that veering away from this ordering will
only increase the number of shortcut edges produced (or slightly decrease,
but not by more than a constant factor).
This is more technical. The main idea is to carefully
examine the effects of contracting a node higher up in the tree, before all of
its descendants were contracted. Although contracting a higher node $v$ decreases
some of the paths from any descendant $u$ to $v$, it creates shortcuts between
all pairs of descendants which have not yet been contracted, which could
cause an exponential number of extra edges to be created. The overall
difference does not increase the big-Omega bound from Abraham et al.'s
contraction order.

\begin{proof}

We will show $G_{t,k,q}$ satisfies the properties, defining $t,k,q$ at the end of
the proof.
Consider a query between two leaves $s^{(a)}$ and $t^{(b)}$ such that 
$\lambda(s,t)=k$ and $a \neq b$.  This type of query makes up a constant fraction of all
queries, so we will limit our analysis to this case.
A regular Dijkstra search settles $s^{(a)}$ and all copies of $s$, and then it settles the parent of
$s^{(a)}$ and all its copies, and continues to settle the successive ancestors of $s^{(a)}$
along with their copies.  A total of $q(k+1)$ nodes are settled in this way.  The backwards search goes through a similar process starting at $t^{(b)}$.
For {\sc contraction hierarchies}, each node only needs to look at neighbors with a higher contraction
order than itself.  If we are using an adjacency list to represent the graph, this can be done by
reordering the adjacency list based on contraction order.

Assume initially that we are using Abraham et al.'s contraction ordering, 
which orders nodes by height from the bottom up (we will remove this assumption shortly).
So in the forward search, the only nodes we may visit are ancestors of $s$ (in any copy).
We refer to this set of nodes as $S$ and recall that it contains $q$ nodes at each layer
$i$ in the tree.  Among the nodes in $S$ with height $i$, let $T_{i}$ contain the
$\frac{q}{2}$ nodes with lower contraction order than the other $\frac{q}{2}$ nodes in that
layer.  Let
\begin{equation}
T=\bigcup _{i=0}^{k/2} T_i.
\end{equation}

Suppose $v^{(c)}$ is one of the $\frac{qk}{4}$ nodes in $T$.
Recall that the shortcut criterion for Abraham et al.'s ordering says
the path $v^{(d)}$--$v^{(c)}$--$u^{(c)}$ (where $u$ is an ancestor of $v$) will be shortcut if $v^{(c)}$ is
contracted before $v^{(d)}$.  Then contracting $v^{(c)}$ will create at least $\frac{qk}{4}$ shortcuts, 
since $v^{(c)}$ is in
the bottom half of the tree and has a lower contraction order than half of the nodes $v$ in other copies.
Therefore, the
forward search will need to look through at least $\frac{q^2 k^2}{16}$ nodes, making the average
query take $\Omega(q^2 k^2)$ time.

Now we will consider a general ordering
by examining the effects of contracting an arbitrary node $v^{(a)}$ on edges in $G_{t,k}^{(a)}$.

If $v^{(a)}$ is contracted before a descendant $u^{(a)}$, shortcuts from $u$ in any copy to $v^{(a)}$ will never
be created.  The number of queries this affects is based on the height of $u$.  If $u$ is a leaf, it only affects queries starting from $u$, but if $u$ is higher up in the tree, it will affect all queries starting at
leaves with $u$ as an ancestor.  In effect, we need to weight the nodes based on their importance.  
We do this using $\sum_{u^{(a)}} t^{\lambda(u)}$, where $u^{(a)}$ is a descendant of $v^{(a)}$ with contraction
order higher than $v^{(a)}$.  
The value of this sum is proportional to the loss in total query time when contracting $v^{(a)}$ compared to
Abraham et al.'s ordering. 
Let $\lambda(v)=i$.
If all of $v^{(a)}$'s descendants were contracted before $v^{(a)}$,
the sum would be $\sum_{j=0}^{i-1} t^{i-j} t^j = i t^i$ because each layer can contribute at most $t^i$ to the sum.
There are two cases to consider.

Case 1: $\sum_{u^{(a)}} t^{\lambda(u)} \leq \frac{1}{2} i t^i$.
In this case, the average query time decreases by at most a factor of two, 
which doesn't affect our big-Omega bound.

Case 2: $\sum_{u^{(a)}} t^{\lambda(u)} > \frac{1}{2} i t^i$.
The number of edges in $G_{t,k}^{(a)}$ that are lost from $v^{(a)}$'s contraction is $\leq t^i$, the number of $v^{(a)}$'s descendants.
However, contracting $v^{(a)}$ before many
of its descendants will create many leaf-leaf shortcuts.

The smallest possible set of contracted descendants would contain the $\geq t^{i/2}$ nodes
in the top $\frac{i}{2}$ layers below $v^{(a)}$.

Given two of these nodes $x^{(a)}$ and $y^{(a)}$ with $\lambda(x,y)=\lambda(v)$,
a shortcut will be created between $x^{(a)}$ and $y^{(a)}$.  Half of the subtrees
rooted at $v^{(a)}$'s children
will have half of their nodes with contraction order higher than $v^{(a)}$, so we will gain at least
${t/2 \choose 2} (\frac{t^{i/2}}{2t})^2 = \frac{t^{i-1}(t/2-1)}{16} \in \Omega(t^i)$ extra shortcuts this way.

Therefore, the number of edges decreases by at most a constant factor, which does not affect
our big-Omega bound.

In both cases, we maintain the 
$\Omega(q^2 k^2)$ bound even with an arbitrary ordering.

Now let
$k=\lceil \frac{\log D}{4} \rceil$ and $q=h$, and we pick $t$ big enough
such that $q t^{k} \geq n$. 
Then the average query for {\sc contraction hierarchies} is $\Omega ((h \log D)^2)$.
\qed
\end{proof}

\subsection{Lower bounding the size of $E^+$}

Abraham et al.'s upper bound of $O((\hat h \log D)^2)$ on the query
time
involves proving that $|E^+| \in O(n \hat h \log D)$.
The latter bound was proven tight in \cite{milo}.
However, the proof assumes the contraction order from the algorithm in Abraham et al.\
which is thought to be NP-hard to compute.  We show a new proof of this 
lower bound generalized to any contraction order.

\begin{theorem} \label{ch preprocess}
For all $h$, $D$, $n$, there is a graph $G=(V,E)$ with highway dimension $h$, diameter
$\Theta (D)$, and $|V| \geq n$ such that for any contraction
ordering, $|E^+| \in \Omega (h |V|
\log D)$.
\end{theorem}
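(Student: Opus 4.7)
The plan is to reuse $G_{t,k,q}$ with the same parameter choice as in Theorem \ref{ch lb}: $k = \lceil \log D / 4 \rceil$, $q = h$, and $t$ large enough that $q t^k \geq n$. By Lemma \ref{gtkq hd} the graph has highway dimension $h$, diameter $\Theta(D)$, and $|V| \in \Theta(qt^k)$, so $h |V| \log D \in \Theta(q^2 t^k k)$; I need to show $|E^+| \in \Omega(q^2 t^k k)$ for every contraction order. I would mirror the two-step strategy of Theorem \ref{ch lb}: first handle Abraham et al.'s height-based order by a direct count, and then extend to an arbitrary order by reusing the Case 1 / Case 2 weighting argument.

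For Abraham et al.'s order, fix a leaf $v$, an ancestor $u$ of $v$, and two distinct copy-indices $c \ne d$ with $v^{(c)}$ contracted before $v^{(d)}$. By Lemma \ref{sp} the unique shortest $v^{(d)}$--$u^{(c)}$ path goes through $v^{(c)}$, so Milosavljevi\'c's shortcut criterion from Section \ref{chlb} places $v^{(d)}$--$u^{(c)}$ in $E^+$. Ranging over $\binom{q}{2}$ unordered pairs $\{c,d\}$ and the $k$ ancestors of $v$ gives $\binom{q}{2} k$ distinct shortcut edges associated with each leaf. Because the pair of endpoints determines the leaf $v$, summing across the $t^k$ leaves is non-redundant and yields $|E^+| \in \Omega(q^2 t^k k)$.

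To handle an arbitrary contraction order, I would replay the weighting argument from the proof of Theorem \ref{ch lb}: assign each descendant $u^{(a)}$ of a node $v^{(a)}$ at height $i$ the importance weight $t^{\lambda(u)}$. If for a constant fraction of nodes the mass $\sum_{u^{(a)}} t^{\lambda(u)}$ of descendants contracted after $v^{(a)}$ stays within a constant factor of the maximum $i t^i$ (Case 1), then essentially all of the Step 1 shortcuts below $v^{(a)}$ still get created, up to a constant factor. Otherwise (Case 2), contracting $v^{(a)}$ ahead of at least half of its descendant mass forces $\Omega(t^i)$ brand-new shortcut edges between pairs of yet-uncontracted descendants sitting in distinct child subtrees, exactly as in the Case 2 accounting of Theorem \ref{ch lb}. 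Aggregating the surviving Step 1 shortcuts and the Case 2 replacements preserves the $\Omega(q^2 t^k k) = \Omega(h |V| \log D)$ bound.

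The main obstacle is the disjointness bookkeeping in the extension step: the Case 2 shortcuts recovered at different internal nodes $v^{(a)}$ must not be double-counted, and they must genuinely compensate for the Step 1 shortcuts lost because of the nonstandard order. The cleanest fix I would pursue is to charge each shortcut edge to a canonical endpoint (for example, the endpoint with the lower contraction order, or in the Case 2 sub-argument, the particular $v^{(a)}$ responsible for creating it). This produces a disjoint attribution and guarantees that the $\Omega(h |V| \log D)$ lower bound on $|E^+|$ holds for any contraction order.
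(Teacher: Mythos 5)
Your first step (the count under Abraham et al.'s ordering) is correct and is exactly what the paper does: $t^k$ leaves, $k$ proper ancestors each, and $\binom{q}{2}$ copy pairs, giving $\Theta(qk|V|)$ shortcuts from leaf contractions alone. The gap is in your second step, the extension to arbitrary orders. You propose to replay the Case~1/Case~2 weighting from Theorem~\ref{ch lb}, but that argument is calibrated to a different quantity: the weight $t^{\lambda(u)}$ measures how many \emph{queries} are affected by a missing shortcut, not how many \emph{edges} are missing. Even if you recast Case~2 in edge terms (as Theorem~\ref{ch lb} partly does), the accounting does not close: there the replacement shortcuts created by contracting $v^{(a)}$ early number only about $t^{i}/32$, against up to $t^{i}$ edges lost at $v^{(a)}$. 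A net loss of a constant fraction of $t^{i}$ at each internal node of height $i$, summed over all $t^{k-i}$ such nodes and all heights, is $\Theta(k t^k)$ --- the entire Step~1 budget. So ``lost and gained quantities agree up to a constant factor per node'' is not sufficient; you need the per-node loss to be a \emph{vanishing} fraction of $t^i$.

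The paper closes this with a sharper, self-contained identity. For an internal node $v$ let $c_i$ be the number of leaves in $v$'s $i$th child subtree contracted after $v$. Contracting $v$ early costs $\sum_i c_i$ leaf edges but creates $\sum_{i\neq j} c_i c_j$ new leaf--leaf edges, so the net change is $A_{v,\theta}=\sum_{i\neq j}c_ic_j-\sum_i c_i$. A short case analysis on how many $c_i$ are nonzero shows $A_{v,\theta}\geq -t^{\lambda(v)-1}$, with the worst case only when a single child subtree carries all the uncontracted leaves; when two or more subtrees do, the cross terms $c_ic_j$ absorb the loss. Summing $-t^{\lambda(v)-1}$ over all internal nodes gives a total loss of at most $k t^{k-1}$, a $1/t$ fraction of $k t^k$, and the gained edges at distinct internal nodes are automatically disjoint (different LCA heights, or disjoint leaf sets at equal heights), which resolves the double-counting worry you flag but leave open. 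To repair your proof you should replace the imported query-time weighting with this exact edge-count bookkeeping, or otherwise prove a per-node loss bound of $o(t^{\lambda(v)})$ rather than $O(t^{\lambda(v)})$.
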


\begin{proof}
We will show $G_{t,k,q}$
satisfies the desired requirements, setting the values of $t,k,q$ at the end of the proof.

We will be concerned only with shortcuts added when contracting leaves. 
We will first count the number of shortcuts added by contracting all of the leaves
first, as in the preprocessing algorithm by
Abraham et al.
Recall the criterion for creating a shortcut in this ordering, which was stated in Section \ref{chlb}.
A path $s^{(a)}$--$s^{(b)}$--$w^{(b)}$--$t^{(b)}$ is shortcut if and only if $a \neq b$, $w$ is a proper
ancestor of $s$, and $s^{(b)}$ is contracted before $s^{(a)}$.  Then the number of 
shortcuts added when contracting all of the leaves is $S =t^k (k) {q \choose 2} \in 
\Theta (qk|V|)$ since there are $t^k$ ways of picking a leaf, $k$ ways of picking a 
proper ancestor, and ${q \choose 2}$ ways of picking two copies.

In general, the number of shortcuts created for leaf $v^{(a)}$ at the time of its contraction
is the number of ancestors
$v^{(a)}$ has in $G_{t,k}^{(a)}$ multiplied by the number of copies $v^{(b)}$, 
$b \neq a$, in other trees.  We will now consider the effects of arbitrary contraction
order on the number of edges a leaf has in its own copy at its time of contraction.

Given an arbitrary contraction order $\theta$ and a non-leaf $v$, let $c_i$, $1 \leq i
\leq t$, be the number of leaves with contraction order higher than $v$ in the
subtree with $v$'s $i$th child as a root.  Then $0 \leq c_i \leq t^{\lambda(v)-1}$
for all $i$.

Contracting $v$ causes $\sum_{i=1}^k c_i$ leaf descendants of $v$ to lose one edge each.
However, contracting $w$ also increases the number of leaf-leaf edges by $\sum_{i \neq
 j} c_i c_j$.

Then the net edge gain for contracting $v$ instead of all leaves first is
\begin{equation}
A_{v,\theta} = \sum_{i \neq j} c_i c_j - \sum_{i=1}^t c_i.  
\end{equation}
In order to find the minimum value of $A_{v,\theta}$, we consider four cases.

Case 1: $\geq 3$ $c_i$'s are nonzero.  Without loss of generality, let the $c_i$'s make
a decreasing sequence.  So $c_1 \geq c_2 \geq \cdots \geq c_t \geq 0$ and $c_3 \geq 1$.
Then 
$c_1 c_3 \geq c_1,\;c_1 c_2 \geq c_2, \;c_2 c_3 \geq c_3,...,\;c_{t-1} c_t
\geq c_t$.
It follows that
\begin{equation}
A_{v,\theta} =  \sum_{i \neq j} c_i c_j - \sum_{i=1}^t c_i \geq 0.
\end{equation}

Case 2: Exactly two $c_i$'s are nonzero.  So $c_1 \geq c_2 \geq 1$ and $c_3 = c_4 =
\cdots = c_t = 0$.  Then $A_{w,\theta}= c_1 c_2 - c_1 - c_2 = (c_1 - 1)(c_2 - 1) - 1$.  If
$c_2 > 1$, then $(c_1 - 1) \geq (c_2 - 1) \geq 1$, so $A_{v,\theta} \geq 0$.  
If $c_2 = 1$, then $c_1 - 1 = 0$, so $A_{v,\theta} = -1$.

Case 3: Exactly one $c_i$ is nonzero.  So $c_1 \geq 1$ and $c_2 = c_3 = \cdots = c_t =
0$.  Then $A_{v,\theta} = -c_i$, so the minimum value of $A_{v,\theta}$ in this case is 
$-t^{\lambda(v)-1}$.

Case 4: All $c_i$'s are zero.  Then clearly $A_{v,\theta} = 0$.

Therefore, the minimum value of $A_v$ is $-t^{\lambda(v)-1}$ from case 3.

Note that the possible leaf-leaf edges we gain from contracting a non-leaf $v$ are
independent of other leaf-leaf edges we gain from contracting another non-leaf $u$:
if $\lambda(v) = \lambda(u)$, the leaves in the edges must be different since they
cannot have both $v$ and $u$ as an ancestor.  If $\lambda(v) \neq \lambda(u)$, the edges
must be different since the lowest common ancestors between the endpoints of each
edge are at different heights.

So given an arbitrary contraction order, the number of leaf-edges within a copy $G_{t,k}^{(a)}$
(at the time of the leaf's contraction) is
\begin{equation}
k t^k - \sum_{v \in G_{t,k}^{(a)}} A_{v,\theta} \geq k t^k - \sum_{i=1}^k t^{k-i} t^{i-1} = k t^k - \sum_{i=1}^k 
t^{k-1} = k t^k - k t^{k-1} = k t^{k-1} (t-1). 
\end{equation}

Then 
\begin{equation}
|E^+| \geq {q \choose 2} k t^{k-1} (t-1) \in \Omega (kq|V|)
\end{equation}

We let $k=\lceil \frac{\log D}{4} \rceil$ and $q=h$, and we pick $t$
such that $q t^{k+1} \geq n$.  Then $G$ has highway dimension $h$, diameter 
$\Theta (D)$, and has $|V| \geq n$.  Finally, given a contraction order $\theta$, 
$|E^+| \in \Omega (h |V| \log D)$. \qed
\end{proof}

\section {Transit Node Routing}

{\sc transit node routing} \cite{tnr} was devised in 2007 by Bast et al., and it (and variants) remain the 
second-fastest family of routing algorithms, behind {\sc hub labeling}
\cite{rptn}.  However, {\sc transit node routing} requires about an order of 
magnitude less space than {\sc hub labeling}.
In this section, we first review the {\sc transit node routing} algorithm, and then
we give a lower bound on the query time.

The algorithm works by picking a set $T \subset V$ of \textit{transit nodes} that hits many long-distance shortest paths. 
$|T|$ is often chosen to be in $\Theta(\sqrt{|V|})$, which makes the algorithm run fastest while
maintaining that additional memory requirements are bounded by the input graph size.
Usually, the contraction order is used to pick $T$ (since contraction order essentially seeks to measure a node's importance
with respect to shortest paths), which works well in practice.

Next, given any node $v$, $A(v) \subset T$ is the set of that node's \textit{access nodes}, which
are chosen to hit the long-distance queries stemming from $v$. This usually means that we want to pick nodes 
in $T$ that are close to $v$.

The distances between all pairs of transit nodes are computed and stored, 
as well as the distances between a 
node $v$ and each of its access nodes.
A query is called a \textit{global query} if $\min (\mbox{dist}(s,u)+\mbox{dist}(u,v)+\mbox{dist}(v,t)\;|
\;u \in A(s),\;v \in A(t)) = \mbox{dist}(s,t)$.  Otherwise, it is a \textit{local query}.
To run an $s$--$t$ query, first run a quick locality filter that determines whether the query is local. This filter is allowed to make one-sided errors;
it can misclassify a global query as local, but not the other way around.
Locality filters are historically calculated using
the coordinates of the vertices.
If it is a global query, calculate the minimum $\mbox{dist}(s,u)+\mbox{dist}(u,v)+\mbox{dist}(v,t)$ by trying all combinations
of access nodes from $A(s)$ and $A(t)$.  Local queries are handled by a fast local search 
such as {\sc contraction hierarchies}.

Abraham et al.\ use a choice of $T$ based on multiscale shortest-path covers to prove that access 
nodes are bounded in size by $O(\hat h)$,
from which it follows that global queries can be handled in $O(\hat h^2)$ time.
Local queries done using {\sc contraction hierarchies} can be handled in $O((\hat h \log D)^2)$ time
as we saw in the previous section (however, local queries tend to be small, making the queries run much faster than the average {\sc contraction hierarchies} query).

This bound is not possible without the new definition of highway dimension.
Again, if we want polynomial time preprocessing, the query time bound for global queries 
increases to $O((\hat h \log \hat h)^2)$.

\subsection{Lower bounding the query time}

While the upper bound for {\sc transit node routing} was for global queries only,
our lower bound will include both local and global searches.
We will use {\sc contraction hierarchies} for local queries.

\begin{theorem} \label{tnr lb}
For all $h$, $D$, $n$, there is a graph $G=(V,E)$ with highway dimension $h$, diameter
$\Theta (D)$, and $|V| \geq n$ such that for any choice of transit nodes $T$
and access nodes $A$, the average query time is $\Omega(h^2)$.
\end{theorem}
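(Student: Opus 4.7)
The plan is to reuse the graph $G_{t,k,q}$ with $q = h$, $k = \lceil \log D / 4 \rceil$, and $t$ large enough that $|V| \geq n$, giving highway dimension $h$ and diameter $\Theta(D)$. As in Theorems \ref{hl lb} and \ref{ch lb}, I would restrict attention to queries $s^{(a)}$--$t^{(b)}$ with $s, t$ leaves, $a \neq b$, and lowest common ancestor at the root, since these constitute a constant fraction of all query pairs, so an $\Omega(h^2)$ bound on their average time proves the theorem.

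The first step is a dichotomy: each such query is either handled locally by {\sc contraction hierarchies}, in which case Theorem \ref{ch lb} applies and the query costs $\Omega((h \log D)^2)$ time, or answered globally in $|A(s^{(a)})| \cdot |A(t^{(b)})|$ time. If even a $\Theta(1/\log^2 D)$ fraction of the target queries is local, the $\Omega(h^2)$ bound is immediate; so the interesting case is when essentially all these queries are global.

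The second step is structural. By Lemma \ref{sp}, the two shortest $s^{(a)}$--$t^{(b)}$ paths have only the root copies $w^{(a)}$ and $w^{(b)}$ as non-endpoint, non-leaf intermediate vertices. In the standard TNR regime where the transit-node set $T$ is small enough to exclude almost every leaf, these are the only viable mid-path access nodes; moreover a direct computation using the edge lengths of $G_{t,k,q}$ shows that the mixed pair $(w^{(a)}, w^{(b)})$ overshoots the true distance by $\tfrac{1}{2} - 2^{-k-1}$, so only the same-copy pairs $(w^{(a)}, w^{(a)})$ and $(w^{(b)}, w^{(b)})$ can certify globality. Consequently the cover condition reduces to $w^{(a)} \in A(s^{(a)}) \cap A(t^{(b)})$ or $w^{(b)} \in A(s^{(a)}) \cap A(t^{(b)})$.

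The third step is a counting argument. Set $\gamma_c^d = |\{v \in \text{leaves of copy } d : w^{(c)} \in A(v)\}|$; for each ordered copy pair $(a,b)$ with $a \neq b$, the $\Theta(t^{2k})$ leaf pairs with LCA at the root must all be covered, giving $\gamma_a^a \gamma_a^b + \gamma_b^a \gamma_b^b \geq \Omega(t^{2k})$. Using the trivial bound $\gamma_c^d \leq t^k$, this forces $\gamma_a^b + \gamma_b^a \geq \Omega(t^k)$, and summing over the $\binom{q}{2}$ unordered pairs yields $\sum_{c \neq d} \gamma_c^d \geq \Omega(q^2 t^k)$. Since $\sum_v |A(v)| \geq \sum_{c,d}\gamma_c^d$, the average leaf has $|A(v)| = \Omega(q)$, so sampling $s, t$ independently gives $E[|A(s)| \cdot |A(t)|] \geq (E[|A|])^2 = \Omega(q^2) = \Omega(h^2)$. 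The main obstacle I anticipate is justifying the restriction on $T$ in the second step: if $T = V$ and self-hubs $v \in A(v)$ are permitted, the trivial choice $A(v) = \{v\}$ makes every query global in constant time. The theorem must therefore be read under an implicit model restriction (e.g., the standard $|T| = O(\sqrt{|V|})$, or a disallowance of self-hubs); once that is in place, the counting above completes the proof.
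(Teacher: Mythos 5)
Your overall strategy is viable and genuinely different from the paper's: the paper defines ``regular'' paths (global, with non-transit endpoints), proves a combinatorial lemma (Lemma \ref{halfs}) giving a large structured set $S$ of such paths, and then divides $|S|$ by the maximum number of paths a single access node can cover; you instead reduce the cover condition to the two root copies and run a product inequality on the quantities $\gamma_c^d$. Your structural step is sound once you exclude the $o(1)$ fraction of pairs for which $s^{(b)}$ or $t^{(a)}$ is a transit node (this needs to be said explicitly, but it follows from $|T|\leq\sqrt{|V|}$ exactly as the paper's exclusion of paths with transit-node endpoints does), and your overshoot computation for the mixed pair is correct. However, there are two concrete gaps.

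First, the dichotomy threshold $\Theta(1/\log^2 D)$ is not justified. Theorem \ref{ch lb} is a statement about the \emph{average} query, equivalently about a \emph{constant fraction} of queries being expensive; it does not say that every query, or every member of an adversarially chosen $1/\log^2 D$-fraction subset, costs $\Omega((h\log D)^2)$. The local queries could be exactly the cheap ones (e.g., queries between leaves whose relevant ancestors all have low contraction order), in which case a $1/\log^2 D$ local fraction contributes nothing. The paper handles this by requiring the local fraction to be at least the constant $1/16$ and observing that the \textsc{contraction hierarchies} lower bound can be strengthened so that \emph{any} $1/16$-fraction subset contains a constant fraction of expensive queries; you need the same constant threshold, which your counting step can tolerate. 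Second, your final step bounds $E[|A(s)|\cdot|A(t)|]$ over \emph{all} pairs, but only the pairs answered globally actually pay $|A(s)|\cdot|A(t)|$; since $|A(v)|$ can be as large as $|T|$, the product mass could in principle sit on the excluded (local) pairs. The fix is available from your own machinery: work with $\alpha(v)=|\{c : w^{(c)}\in A(v),\ c\neq\mbox{copy of } v\}|\leq |A(v)|$, which satisfies $\alpha(v)\leq q$, so the at most constant fraction of bad pairs contributes at most a constant fraction of $\sum_{s,t}\alpha(s)\alpha(t)=\Omega(q^4t^{2k})$, and the remaining global pairs still give average cost $\Omega(q^2)$. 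With those two repairs your argument goes through.
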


We work up to the proof of Theorem \ref{tnr lb} using a series of definitions
and lemmas.

Call a leaf-leaf shortest path \textit{regular} if the shortest path is global and
neither endpoint is a transit node.  We would like to exclude irregular shortest
paths from our analysis.

First, we show that queries with a transit node as an endpoint do not make up
a constant fraction of all queries.  Since $|T| \leq \sqrt{|V|}$, the number of 
shortest paths
in which at least one endpoint is a transit node is $O(|V| \sqrt{|V|}) \in o(|V|^2)$.

Next, we consider the case in which local queries make up at least a
$\frac{1}{16}$ fraction of total queries.  In the previous section, we showed in Theorem \ref{ch lb}
that the average query
for {\sc contraction hierarchies} requires $\Omega((h \log D)^2)$ time.
The proof showed a constant fraction of all queries required this amount of time.
If we lower all of the constants in the proof, we can show that given any set of
$\frac{1}{16}$ of total queries, a constant fraction of those queries require
$\Omega((h \log D)^2)$ time (thus a constant fraction of all queries require that amount
of time).  This big-Omega bound is higher than the one we seek to prove
for global queries, so for the rest of our analysis we can assume that a
$< \frac{1}{16}$ fraction of total queries are irregular.
In particular, this means a constant fraction of the total queries are 
regular.

There is a simple intuition for the rest of the proof. Given a regular 
shortest
path $s^{(a)}$--$t^{(b)}$, either $s^{(a)}$ or $t^{(b)}$ must have an access node
in the other's copy, since the non-endpoint vertices on the shortest path all come
from one copy.
The proof becomes technical because we must show that a constant fraction of
leaves have a large amount of access nodes in distinct copies and subtrees.
But we are able to show that a constant fraction of the nodes need 
$\Omega(q^2)$
access nodes, and the proof follows.

\begin{lemma} \label{halfs}
If the number of local leaf-leaf queries is a $<\frac{1}{16}$ fraction of total queries, then
there is a set of $\frac{q}{2}$ copies in which there are $\frac{t^k}{2}$ 
leaves that are each an endpoint of
$\frac{t^k}{2}$ regular shortest paths going to
at least $\frac{q}{2}$ different copies.
\end{lemma}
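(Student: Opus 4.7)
I will use a two-level reverse-Markov argument that first identifies \emph{rich} leaves (those participating in many regular shortest paths) and then concentrates them into \emph{good} copies. First, I would combine the hypothesis that fewer than $\frac{1}{16}$ of all leaf-leaf queries are local with the earlier observation that $o(|V|^2)$ queries have a transit-node endpoint, to conclude that at least a $\bigl(\frac{15}{16}-o(1)\bigr)$-fraction of the $\Theta\bigl((qt^k)^2\bigr)$ leaf-leaf queries are regular. Letting $r(v)$ denote the number of regular shortest paths with leaf $v$ as an endpoint, double counting gives $\sum_v r(v)\geq \bigl(\frac{15}{16}-o(1)\bigr)(qt^k)^2$.

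Next I would call $v$ \emph{rich} if $r(v)\geq \frac{1}{2}qt^k$. Using the trivial upper bound $r(v)\leq qt^k$, splitting $\sum_v r(v)$ into its rich and non-rich parts and comparing to the lower bound yields $\#\{\text{rich leaves}\}\geq \frac{7}{8}qt^k$. A rich leaf $v$ then automatically satisfies the copy-diversity requirement: its $\geq\frac{1}{2}qt^k$ regular paths distribute among $q$ copies each holding at most $t^k$ candidate other endpoints, so the set of copies visited has size $\geq q/2$. The same reverse-Markov step at the copy level---the $\geq\frac{7}{8}qt^k$ rich leaves are partitioned among $q$ copies with at most $t^k$ rich leaves per copy---shows that $\geq \frac{3}{4}q$ copies contain $\geq \frac{t^k}{2}$ rich leaves each.

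Selecting any $\frac{q}{2}$ of these copies and $\frac{t^k}{2}$ rich leaves within each gives the required structure. Since the parameters fixed in Theorem~\ref{tnr lb} satisfy $t^k\geq q$, it is immediate to extract from every rich leaf a specific subset of $\frac{t^k}{2}$ regular paths still spanning $\geq\frac{q}{2}$ copies (e.g.\ take one representative path per visited copy and then pad). The main obstacle is essentially bookkeeping: the constants must propagate cleanly through the two reverse-Markov steps so that the slack from $\frac{15}{16}\to\frac{7}{8}\to\frac{3}{4}$ suffices to deliver all three quantitative conditions ($\frac{q}{2}$ copies, $\frac{t^k}{2}$ rich leaves per copy, and $\frac{t^k}{2}$ paths to $\frac{q}{2}$ distinct copies for each rich leaf) simultaneously; the choice of threshold $\frac{1}{2}qt^k$ for ``rich'' is what makes each of these conditions hold with room to spare.
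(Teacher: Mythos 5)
Your two-level averaging strategy is a cleaner and more explicit route than the paper's argument (the paper negates the statement and counts, contrapositive-style, that $\tfrac12\cdot\tfrac12\cdot\tfrac12\cdot\tfrac12=\tfrac1{16}$ of all leaf-leaf paths would have to be irregular), and your first Markov step and your second Markov step (at the copy level) are both arithmetically fine. But there is a genuine gap in the middle step, and it traces to how you read the conclusion. The lemma, as it is used in the proof of Theorem~\ref{tnr lb}, must deliver the \emph{per-copy} statement: each selected leaf has at least $\frac{t^k}{2}$ regular shortest paths going to \emph{each} of at least $\frac{q}{2}$ distinct copies, so that a single leaf contributes $\frac{q}{2}\cdot\frac{t^k}{2}$ paths to $S$ and $|S|\geq\frac12\cdot\frac{q}{2}\cdot\frac{t^k}{2}\cdot\frac{q}{2}\cdot\frac{t^k}{2}$. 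You instead prove the weaker reading --- $\frac{t^k}{2}$ regular paths in total, collectively touching $\geq\frac{q}{2}$ copies --- and your ``take one representative path per visited copy and then pad'' sentence makes that explicit. Under the weaker reading the downstream count of access nodes drops by a factor of $q$ and the theorem's $\Omega(h^2)$ bound is lost.

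Concretely, the richness threshold $r(v)\geq\frac12 qt^k$ is too low to force the per-copy condition: if $m$ copies receive at least $\frac{t^k}{2}$ of $v$'s regular paths and each copy can absorb at most $t^k$, then $\frac12 qt^k\leq m t^k+(q-m)\frac{t^k}{2}$ gives only $m\geq 0$; an adversarial distribution can put $t^k$ paths into $\frac{q}{2}-1$ copies and scatter the remainder thinly, leaving fewer than $\frac{q}{2}$ copies with $\frac{t^k}{2}$ paths each. The fix is to raise the threshold: with $r(v)\geq\frac34 qt^k$ the same inequality yields $m\geq\frac{q}{2}$, and redoing your first Markov step with this threshold still gives at least (roughly) $\frac34 qt^k$ rich leaves from the $\bigl(\frac{15}{16}-o(1)\bigr)$-fraction hypothesis, which is enough slack to run your copy-level Markov step. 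So the architecture of your proof survives, but as written the key claim ``a rich leaf automatically satisfies the copy-diversity requirement'' establishes only copy \emph{coverage}, not the copy-by-copy path counts the theorem consumes.
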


\begin{proof}
Assume the number of local leaf-leaf queries is $o(|V|^2)$, but assume the lemma is false.
Then there must be $\geq \frac{q}{2}$ copies with the following
property:  $\geq \frac{t^k}{2}$ leaves are each endpoints of $\leq \frac{t^k}{2}$ regular shortest paths 
going to $\geq \frac{q}{2}$ copies.

Now consider the maximum number of regular leaf-leaf shortest paths possible in $G_{t,k,q}$ under
that assumption.  Making all four inequalities tight, we have $\frac{q}{2}$ copies with $\frac{t^k}{2}$ leaves each as 
endpoints of $\frac{t^k}{2}$ regular shortest paths going to $\frac{q}{2}$ copies each.  In other words, 
in half of the copies, half of the leaves each have the property that in half of the copies,
half of the shortest paths going from that leaf to the copy are regular.
This means that at the very least $\frac{1}{2} \cdot \frac{1}{2} \cdot \frac{1}{2} \cdot \frac{1}{2}
=\frac{1}{16}$ of all leaf-leaf shortest paths must not be regular.  

This violates one of our assumptions, so we have a contradiction. \qed
\end{proof}

Now we have the machinery necessary to prove Theorem \ref{tnr lb}.

\begin{proof}

We will show that
$G_{t,k,q}$ has the desired properties, with the values of $t$, $k$, and $q$ to be defined
at the end of the proof.

From our previous argument at the start of this subsection, we need only consider the case where $<\frac{1}{16}$ of all queries are irregular.

We use Lemma \ref{halfs} to define a set $S$ of regular shortest paths such that there
are exactly $\frac{q}{2}$ copies that have exactly $\frac{t^k}{2}$ leaves with exactly
$\frac{t^k}{2}$ regular shortest paths in $S$ going to $\frac{q}{2}$ copies.

Then 
\begin{equation}
|S| \geq \frac{1}{2} \cdot \frac{q}{2} \cdot \frac{t^k}{2} \cdot \frac{q}{2} \cdot \frac{t^k}{2}=
\frac{q^2 t^{2k}}{32}.  
\end{equation}
We added another factor of $\frac{1}{2}$ because these shortest paths can be double counted.

Given a path $P \in S$, $P$'s endpoints are two leaves $s^{(a)}$ and $t^{(b)}$ in 
different copies and must be of the form 
$s^{(a)}$--$s^{(b)}$--$w^{(b)}$--$t^{(b)}$ or 
$s^{(a)}$--$w^{(a)}$--$t^{(a)}$--$t^{(b)}$ by Lemma \ref{sp}.  Without loss of generality, assume that $P$ is $s^{(a)}$--$s^{(b)}$--$w^{(b)}$--$t^{(b)}$.
Since the path is global, $s^{(a)}$ must have an access node on $P$.  The access node can't be $s^{(a)}$ itself since
$P$ is regular.  Therefore, the access node must be in $G_{t,k}^{(b)}$.

This access node hits at most $\frac{t^k}{2}$ paths 
in $S$ stemming from $s^{(a)}$ because that is the total number of shortest paths in $S$ from $s^{(a)}$ 
to a leaf in $G_{t,k}^{(b)}$.

So given an arbitrary path in $S$, we have shown that an access node for some node $v^{(a)}$
must exist that can hit at 
most $\frac{t^k}{2}$ other shortest paths in $S$.  Then the total number of access nodes needed 
in $S$ is at the very least
\begin{equation}
\frac{q^2 t^{2k}}{32} \div \frac{t^k}{2} = \frac{q^2 t^k}{16} \in \Omega (q |V|).
\end{equation}

As in earlier proofs, we let $k=\lceil \frac{\log D}{4} \rceil$ 
and $q=h$, and we pick $t$
such that $q t^{k+1} \geq n$.  Then $G$ has highway dimension $h$, diameter 
$\Theta (D)$, and has $|V| \geq n$.  

Queries in which both endpoints' access node sets are $\Omega(h)$ will take $\Omega(h^2)$ time,
and these make up a constant fraction of all global queries. \qed
\end{proof}

\section{Conclusions and Future Work}

We proved lower bounds on the query time of 
{\sc hub labeling}, {\sc contraction hierarchies}, and {\sc transit node routing}.
The proofs are all quite different, despite using the same family of graphs
for each proof.
We also generalized a lower bound on the size of $E^+$ in {\sc contraction hierarchies}
preprocessing, and established hardness for optimal preprocessing in {\sc hub labeling}.

Although we have proven lower bounds for the query times of three state-of-the-art
algorithms, the graphs used in the arguments are not representative of real-world graphs.
For instance, the graphs do not have small separators and are not planar.
This implies it may be possible to circumvent this lower bound using different properties
that better capture the structure of real-world graphs.

Another way to work with more realistic road networks is to use the idea of multiscale
dispersed graphs, defined in \cite{mdg}, as a new model for graphs that simulate real-world
graphs.  One may be able to obtain better bounds on the query time with this model.

Throughout this paper, we assumed undirected graphs, so future work could extend these results
to the directed case.
Furthermore, apart from {\sc hub labeling}, the upper and lower bounds are not tight because
of the different definitions of highway dimension.
Ideally, we would find a way to prove the lower bounds using the more recent definition
of highway dimension.  However, we cannot use $G_{t,k,q}$ for this task.  Under the new
definition, $G_{t,k,q}$ has highway dimension at least $q+k$, since the new definition
guarantees a graph's degree is bounded by its highway dimension.

\section*{Acknowledgments}\label{sec:Acknowledgments}

The results in this paper are from the senior honors 
thesis of the author, written under the
direction of Prof. Lyle McGeoch, at Amherst College.  We would like to
give a huge thanks to
Lyle McGeoch for helpful discussions and suggestions throughout the writing process.
We are grateful for the Post-Baccalaureate Summer Research Fellowship program
at Amherst College, which supported the writing of this paper.

\bibliographystyle{plain}
\bibliography{routing}

\end{document}